\theoremstyle{plain} 
\newtheorem{thm}{Theorem}
\newtheorem{cor}{Corollary}
\newtheorem{prop}{Proposition}
\newtheorem{lem}{Lemma}
\theoremstyle{definition}
\newtheorem*{asmp}{Regularity Conditions}
\theoremstyle{remark} 
\newtheorem{ex}{Example}
\newcommand{\ra}[1]{\renewcommand{\arraystretch}{#1}}
\newcommand{\prob}{\mathsf{P}}
\newcommand{\unif}{{\sf Unif}}
\newcommand{\nm}{{\sf N}}
\newcommand{\expo}{{\sf Exp}}
\newcommand{\gam}{{\sf Gamma}}
\newcommand{\chisq}{{\sf ChiSq}}
\newcommand{\RR}{\mathbb{R}}
\newcommand{\ZZ}{\mathbb{Z}}
\newcommand{\TT}{\mathbb{T}}
\newcommand{\lPi}{\underline{\Pi}}
\newcommand{\uPi}{\overline{\Pi}}
\begin{document}

\title{\textbf{Divide-and-conquer with finite sample sizes: valid and efficient possibilistic inference}}
\author{Emily C. Hector\textsuperscript{$\star$}, Leonardo Cella\textsuperscript{$\dagger$} and Ryan Martin\textsuperscript{$\star$}\hspace{.2cm}\\
\textsuperscript{$\star$}Department of Statistics, North Carolina State University\\
\textsuperscript{$\dagger$}Department of Statistical Sciences, Wake Forest University}
\date{}
\maketitle
 
\bigskip
\begin{abstract}
Divide-and-conquer methods use large-sample approximations to provide frequentist guarantees when each block of data is both small enough to facilitate efficient computation {\em and} large enough to support approximately valid inferences. When the overall sample size is small or moderate, likely no suitable division of the data meets both requirements, hence the resulting inference lacks validity guarantees. We propose a new approach, couched in the {\em inferential model} framework, that is fully conditional in a Bayesian sense and provably valid in a frequentist sense. The main insight is that existing divide-and-conquer approaches make use of a Gaussianity assumption twice: first in the construction of an estimator, and second in the approximation to its sampling distribution. Our proposal is to retain the first Gaussianity assumption, using a Gaussian working likelihood, but to replace the second with a validification step that uses the sampling distributions of the block summaries determined by the posited model. This latter step, a type of probability-to-possibility transform, is key to the reliability guarantees enjoyed by our approach, which are uniquely general in the divide-and-conquer literature.  In addition to finite-sample validity guarantees, our proposed approach is also asymptotically efficient like the other divide-and-conquer solutions available in the literature.  Our computational strategy leverages state-of-the-art black-box likelihood emulators. We demonstrate our method's performance via simulations and highlight its flexibility with an analysis of median PM\textsubscript{2.5} in Maryborough, Queensland, during the 2023 Australian bushfire season.
\end{abstract}

\noindent%
{\em Keywords:} emulator, $g$-and-$k$ distribution, inferential model, relative likelihood, validity.

\maketitle 

\section{Introduction} \label{s:intro}

Divide-and-conquer techniques have emerged as powerful tools for big data analysis when data sets' sample sizes are so large that, when divided across a number of central processing units (CPUs), the subsets remain sufficiently large and rich to yield approximately valid inference. A more challenging but still common setting is when estimation alone is computationally demanding enough that analyzing the entire dataset is infeasible, yet the sample size is not large enough to guarantee (approximately) valid inference. In this paper, we develop divide-and-conquer methodology for computationally challenging optimization problems in this latter setting, couched in the {\em inferential models} (IMs) framework of \cite{martin2013inferential, martin2015inferential} and \cite{martin2019false}.  The IM framework's two defining features are:
\begin{itemize}
\item it is fully conditional in a Bayesian sense, meaning that it assigns data-dependent degrees of belief to all assertions about the unknown parameter; and
\vspace{-2mm}
\item it is reliable or {\em valid} in a frequentist sense, meaning that assigning high degrees of belief to false assertions about the unknowns is a low probability event. 
\end{itemize}
Details are given in Section~\ref{ss:oracle-IM}, but an important consequence of validity is that inferential procedures derived from the IM output, such as confidence regions and hypothesis tests, are provably calibrated at any desired level, independent of sample size.

Broadly speaking, the divide-and-conquer framework takes a data set of sample size $n$ that is computationally prohibitive to analyze in its entirety and splits it into $B$ blocks of roughly equal sample size to be analyzed in parallel across a distributed computing system. In almost all divide-and-conquer methods (see the review in Section \ref{s:dac-background}), block-specific sample sizes are assumed large and asymptotic Gaussianity of the estimating functions and/or estimators in the $B$ blocks is used to derive a combined estimator and approximate its distribution. The combined estimator, which we hereafter call the ``large-$n$'' estimator, typically takes the form of a weighted average, with weights given by the inverse asymptotic variance of block-specific estimators. When the sample size within each block is not particularly large, the asymptotic distribution of the large-$n$ estimator is not appropriately calibrated. In practice, this often leads to inflated Type-I errors and an inflated tendency to incorrectly conclude that an effect of inferential interest is statistically non-zero. We illustrate further in Section~\ref{s:simulations} the deleterious consequences of using inappropriately calibrated inferences. 

One motivating example is the \emph{g-and-k} family of distributions \citep{haynes1997robustness, rayner2002numerical}, defined by the quantile function
\begin{align*}
Q(u)&= \mu + \sigma z_u \left( 1+c \, \frac{1-e^{-gz_u}}{1+e^{-gz_u}} \right) (1+z_u^2)^{k}, \quad 0 \leq u \leq 1,
\end{align*}
where $\mu \in \mathbb{R}$ is a location parameter, $\sigma > 0$ is a scale parameter, $g \in \mathbb{R}$ measures skewness and $k >-1/2$ measures kurtosis, $z_u = \Phi^{-1}(u)$ is the $u^\text{th}$ standard Gaussian quantile, and $c$ is a constant corresponding to the value of ``overall symmetry''.  If $g<0$, then the distribution is skewed to the left; likewise, $g>0$ indicates skewness to the right. The \emph{g-and-k} family can capture a broad class of distributional shapes. The family can also represent shorter tails than the Gaussian when $k<0$. Due to their flexibility, this family has been used to model complex financial and climate data, among others. Finding the maximum likelihood estimator in such models is difficult because each log-likelihood evaluation requires solving the inverse problem $y_i=Q(u_i)$ for each observation $y_i$, $=1, \ldots, n$. The optimization is therefore not only expensive but also numerically challenging. Furthermore, when $n$ is not too large (e.g., $n=200$), computationally prohibitive likelihood evaluations are needed to ensure validity of inference. As the distribution is easy to sample from, it is frequently used to illustrate the use of approximate Bayesian computation \citep[e.g.,][]{fearnhead2012constructing}, which can be slow and difficult to tune and has no frequentist calibration guarantees. We reduce the computational burden and guarantee valid inference through divide-and-conquer in the IM framework. 

Our main contribution in this paper is the development of a divide-and-conquer IM framework that, in addition to offering Bayesian-like fully conditional uncertainty quantification, is frequentistly valid in finite-samples and equivalent to the asymptotically efficient full-data IM in large samples. In other words, our proposed IM gains important statistical validity guarantees, compared to existing approaches that focus solely on asymptotic validity, at no (asymptotic) statistical efficiency loss---one may have their cake and eat it too.  Further, we develop a new computationally efficient tool for evaluation of our proposed valid divide-and-conquer IM based on a black-box likelihood \emph{emulator}, along with other more basic strategies to speed up IM computations. 

Section~\ref{s:background} gives an overview of the divide-and-conquer framework and IMs. Section~\ref{ss:IM-connection} lays the groundwork by looking at two extreme versions of a divide-and-conquer IM solution: one that is ``optimal'' but practically out of reach and one that is incredibly simple but only asymptotically valid.  Building on this experience, Section~\ref{s:proposal} describes our new approach for divide-and-conquer inference that achieves both finite-sample validity and asymptotic efficiency compared to an oracle solution that has the necessary computational resources to handle the full data. Section~\ref{s:simulations} demonstrates the performance of our proposed solution in several numerical examples, including an analysis of median PM\textsubscript{2.5} in Maryborough, Queensland, during the 2023 Australian bushfire season.  All code to reproduce the results is available at \url{https://github.com/ehector/IMdac}. 

\section{Background and notation} 
\label{s:background}

\subsection{Divide-and-conquer methods}
\label{s:dac-background}

Divide-and-conquer methods for analyzing massive distributed data emerged directly from \cite{Glass-1976}'s meta-analysis. The primary task in a divide-and-conquer framework is to obtain unified inference across $B$ independent blocks of data that is both computationally and statistically efficient. The size of each block must be small enough that it can be quickly analyzed, but large enough that estimates are approximately valid, leading to a fundamental tension between computational and statistical efficiency.  Data summaries from each block are typically used to reduce the communication and computation costs, so the main challenge and focus of divide-and-conquer methods is in developing statistically and computationally efficient rules for combining these summaries.

To fix notation, let $Z^n = (Z_1,\ldots,Z_n)$ consist of $n$ independent observables having joint distribution $\prob_\Theta$, depending on an uncertain true parameter $\Theta$ taking values in the parameter space $\TT \subseteq \RR^p$; the dependence of $\prob_\Theta$ on $n$ is omitted.  The individual $Z_i$'s may represent pairs $(X_i, Y_i)$ of predictor and response variables, as in an observational study, or the predictor variables may be fixed constants, as in a designed experiment. In any case, the full data $Z^n$ is randomly divided into $B$ blocks of sizes $n_1,\ldots,n_B$ of comparable sizes; we denote these blocks by $Z^{(b)}$ for $b=1,\ldots,B$.  

As indicated above, it is often the case that only summaries of the data blocks are available.  We will assume that, for each $b=1,\ldots,B$, the summary $S_b = (n_b, \hat\theta_{Z^{(b)}}, J_{Z^{(b)}})$ of $Z^{(b)}$ includes the block sample size $n_b$, the $p$-dimensional maximum likelihood estimator $\hat\theta_{Z^{(b)}} \in \TT$, and the $p \times p$ observed Fisher information matrix $J_{Z^{(b)}}$. The aggregation of the $B$ summaries is $S^n = (S_{1},\ldots,S_{B})$, and the ``total information'' $J_{S^n} = \sum_{b=1}^B J_{Z^{(b)}}$.  Finally, we use lowercase letters $z^n$, $z^{(b)}$, $s_b$, $s^n$, $J_{s^n}$, etc.~to denote realizations of the random variables $Z^n$, $Z^{(b)}$, $S_b$, $S^n$, $J_{S^n}$, etc. 

Divide-and-conquer methods have spanned multiple topics from kernel ridge regression \citep{zhang2015divide}, high-dimensional sparse regression \citep{Lee-etal-2017, lin2019race}, variable screening \citep{diao2024distributed} and empirical likelihood \citep{zhou2023distributed} to modeling matrices \citep{mackey2015distributed, nezakati2023unbalanced}, high-dimensional correlated data \citep{hector2020doubly, hector2021distributed, hector2022joint} and spatial fields \citep{lee2023scalable, hector2024distributed, hector2025distributed}. See \cite{chen2021divide, zhouetal2023distributed, Hector-etal-2024} for recent reviews. The divide-and-conquer methodology we focus on is termed ``one-shot'' because each block of data is analyzed only once (we therefore omit literature on one-step updates and surrogate likelihoods). Most one-shot approaches rely on (weighted) averaging, where the combined estimator is the (weighted) average of the study estimators \citep[e.g.][]{lin2011aggregated, shi2018massive, hector2023parallel}. 

Related to our work, inspired by Fisher's fiducial inference \citep{Fisher-1935, fisher1956statistical} and \cite{Efron-1993}'s confidence distribution, \cite{Singh-Xie-Strawderman-2005, Xie-Singh-Strawderman-2011, Liu-Liu-Xie-2014, Liu-Liu-Xie-2015, Yang-Liu-Wang-Xie-2016, Michael-Thornton-Xie-Tian, Tang-Zhou-Song-2020} proposed to combine inferences across studies using a frequentist confidence distribution. In this body of work, the confidence distribution is a sample-dependent function that encodes all confidence levels of a parameter. In contrast to the possibility contours obtained through the IM framework, introduced in Section \ref{ss:oracle-IM} below, the primary focus of this framework is on controlling the behaviour of the confidence distribution at a point null hypothesis.  Care must be taken, however, since applying the familiar probability calculus---that is, integration---to confidence distributions for broader uncertainty quantification creates risks \citep[e.g.,][]{fraser.cd.discuss, fraser2011} and, in particular, false confidence \citep{Ryansatellite}. 

\subsection{Inferential models}
\label{ss:oracle-IM}

As stated briefly in Section~\ref{s:intro}, the inferential model (IM) framework offers data-driven quantification of uncertainty about unknown parameters in statistical models, among other things.  This uncertainty quantification is designed to be fully conditional in a Bayesian sense and provably reliable in a frequentist sense.  Achieving both the Bayesian and frequentist goals simultaneously requires something beyond the textbook probability and statistical theory.  IMs' specific novelty is that its uncertainty quantification is couched in the language of {\em imprecise probability theory} or, more specifically, {\em possibility theory} \citep[e.g.,][]{dubois.prade.book, dubois2006possibility}.  While possibility theory may be unfamiliar to the reader, it is easy to explain, which we do now.  

In a single sentence, possibility theory is probability theory with integration replaced by optimization.  Start with a function $\pi: \TT \to [0,1]$ with the property $\sup_{\theta \in \TT} \pi(\theta) = 1$.  This function is called a {\em possibility contour} or simply a {\em contour}.  The ``supremum-equals-one'' condition parallels the ``integral-equals-one'' familiar normalization condition for probability density functions.  Then the contour $\pi$ determines a {\em possibility measure} $\uPi$ via optimization: 
\[ \uPi(H) = \sup_{\theta \in H} \pi(\theta), \quad H \subseteq \TT. \]
We postpone offering an interpretation of $\uPi(H)$ for now but, roughly speaking, $\uPi(H)$ can be interpreted as an upper probability, or an upper bound on a range of candidate subjective probabilities.  There is a corresponding lower probability $\lPi$, also determined by the contour $\pi$, but we will not need this in what follows.  Since we are not using the lower probability, we will drop the bar notation and simply write $\Pi$ for $\uPi$.  

The statistical inference problem involves data $Z^n$ from distribution $\prob_\Theta$, where $\Theta \in \TT$ is unknown or uncertain.  The IM framework maps the observed data $z^n$ to a possibility contour $\pi_{z^n}$ and corresponding possibility measure $\Pi_{z^n}$ supported on $\TT$.  We interpret $\Pi_{z^n}(H)$ as a data-driven measure of how possible the assertion ``$\Theta \in H$'' is, given the observed data $z^n$. Simply having the mathematical properties of a possibility measure is not enough to make the IM's uncertainty quantification meaningful.  This meaningfulness comes from frequentist-style calibration properties, the strongest and most fundamental of which is {\em strong validity}, i.e., 
\begin{equation}\label{eq:IMvalidity}
\prob_{\Theta}\{ \pi_{Z^n}(\Theta) \leq \alpha\} \leq \alpha, \quad  \text{for all $\alpha \in [0,1]$}.
\end{equation}
An immediate consequence is an easier-to-interpret property called {\em validity},
\[ \sup_{\theta \in H} \prob_\theta\{ \Pi_{Z^n}(H) \leq \alpha \} \leq \alpha, \quad \alpha \in [0,1], \quad H \subseteq \TT. \]
Roughly, validity implies that the IM assigning small possibility values to a true hypothesis is a rare event.  Therefore, as the familiar inductive logic goes, if we evaluate $\Pi_{z^n}(H)$ in our application and find that this number is small, then inferring $H^c$ would be warranted.  See \cite{cella2023possibility} for further details on interpretation. In addition, strong validity implies that procedures derived from the IM output achieve the familiar frequentist error rate control guarantees.  In particular, the set $C_\alpha(z^n) = \{\theta \in \TT: \pi_{z^n}(\theta) > \alpha\}$ is a genuine $100(1-\alpha)$\% confidence set for $\Theta$ in the sense that it has frequentist coverage probability at least $1-\alpha$, i.e., $\inf_\theta \prob_\theta\{ C_\alpha(Z^n) \ni \theta\} \geq 1-\alpha$.

Following \citet{martin2022valid,martinpp2,martinpp3}, one constructs a valid possibilistic IM in two steps: {\em ranking} and {\em validification}.  The ranking step involves the specification of a ranking function $R(z^n,\theta)$ that ranks the parameter value in terms of how compatible it is with the data $z^n$, with higher values of the ranking function meaning greater compatibility.  A very natural choice of the ranking function is the {\em relative likelihood}
\[ R(z^n, \theta) = L_{z^n}(\theta) / L_{z^n}(\hat\theta_{z^n}), \quad \theta \in \TT, \]
where $L_{z^n}(\theta)$ is the likelihood function based on $z^n$ and, as above, $\hat\theta_{z^n}$ is the maximum likelihood estimate based on $z^n$.  Then the IM's possibility contour is obtained through the validification step, which amounts to doing the following probability calculation:
\begin{align}    
\pi_{z^n}(\theta) = \prob_{\theta} \{ R(Z^n, \theta) \leq R (z^n, \theta)\}, \quad \theta \in \TT.
\label{e:oracle-IM}
\end{align}
This is a version of the so-called probability-to-possibility transform \citep[e.g.,][]{hoseandhass2020,hose2021universal} applied to the relative likelihood.  If evaluation of the likelihood and the maximum likelihood estimator are computationally expensive, then the IM contour defined above may be out of reach in practice. One of this paper's main constributions is a set of analytic and efficient computational strategies to approximate a possibilistic IM contour like that in \eqref{e:oracle-IM} in such cases; see Section \ref{ss:IM-connection}.

Although the cases we consider in this paper assume a parametric model for the data at hand, the ranking--validification construction can also be applied to distribution-free problems; see, e.g., \citet{cella.martin.imrisk}, \citet[][Sec.~6]{martinpp3}, and \cite{CELLA2024}. Additionally, while the relative likelihood is a natural choice for ranking in parametric problems, this is not the only option and, in fact, the validification step in \eqref{eq:IMvalidity} can be performed with any suitable ranking function $R$. This flexibility is relevant in Section~\ref{s:proposal}, where we introduce a new, strongly valid IM for divide-and-conquer inference.

\section{Towards a divide-and-conquer IM}
\label{ss:IM-connection}

\subsection{The ultima Thule}
\label{ss:thule}

The IMs approach described in Section~\ref{ss:oracle-IM} uses the whole data set $z^n$ to draw valid and efficient possibilistic inferences.  The aforementioned ranking and validification steps require two things: the relative likelihood (which implicitly depends on the maximum likelihood estimator) and its distribution. In settings where evaluating the likelihood is computationally expensive, obtaining the maximum likelihood estimator and computing the relative likelihood over a sufficiently dense grid of candidate values for $\Theta$ becomes prohibitive. If the relative likelihood can be computed, the validification step can be performed analytically when its distribution is available in closed-form. Otherwise, expensive computations are again needed to evaluate its distribution empirically.

Divide-and-conquer analysis aims to bypass simultaneous and potentially expensive computations with the whole data $z^n$ by combining cheaper, block-specific inferences on $\Theta$ based on the blocked data $z^{(b)}$, $b=1,\ldots,B$. To evaluate the contour in \eqref{e:oracle-IM} in a divide-and-conquer framework, a first idea might be to try reconstructing this full-data contour using only summary statistics $s_b$ from $z^{(b)}$. This is achievable when the relative likelihood $R(z^n, \theta)$ depends on the data $z^n$ only through $s^n$, the aggregate of the block summary statistics.  Below, we present two examples where this holds, aiming to build intuition about the IM construction and the complexity of the problem at hand.  Even for these simple cases where computation is virtually free, reconstructing the full-data IM solution from the blocked data is quite challenging, hence the need for a different approach.  Our proposed solution in Section~\ref{s:proposal} works very well even when the likelihood functions are computationally very expensive; see the examples in Section~\ref{s:simulations}.

\begin{ex}
\label{ss:gaussian}
Let $Z^n$ consist of $n$ iid samples from $\nm(\Theta, \tau^2)$, where $\tau^2 > 0$ is known.  Split the data into blocks $Z^{(b)}$ of size $n_b$, for $b=1,\ldots,B$. Then $\hat\theta_{Z^{(b)}}$ is the within-block sample mean, which is Gaussian with mean $\Theta$ and variance $J_{Z^{(b)}}^{-1} = n_b^{-1} \tau^2$, which does not depend on $Z^{(b)}$.
The block-specific relative likelihood is $R(z^{(b)}, \theta) = \exp\{- J_{z^{(b)}} (\theta - \hat\theta_{z^{(b)}})^2/2\}.$ The corresponding IM contour for each block $b=1,\ldots,B$ is
\begin{align*}
\pi_{z^{(k)}}(\theta) & = \prob_{\theta} \{ R(Z^{(b)}, \theta) \leq R(z^{(b)},\theta)\} \\
& = 1 - F_1\{J_{z^{(b)}} (\theta - \hat\theta_{z^{(b)}})^2 \} = 1 - F_1\{ n_b (\theta - \hat\theta_{z^{(b)}})^2 / \tau^2 \}, \quad \theta \in \RR, 
\end{align*}
with $F_1$ the $\chisq(1)$ distribution function.  The full-data IM contour is defined similarly: 
\begin{align}
\pi_{z^n}(\theta) &  = 1 - F_1\{ J_{s^n}(\theta - \hat\theta_{z^n})^2 \} = 1 - F_1\{ n (\theta - \bar z)^2 / \tau^2 \}, \quad \theta \in \RR, \label{e:gaussian-IM-oracle}
\end{align}
with $\hat\theta_{z^n}=\bar z$ and $J_{s^n} = n/\tau^2$ representing the full-data sample mean and Fisher information, respectively.  The key point is that the full-data IM contour can be obtained from the block-specific summary statistics alone.  For illustration, the left panel of Figure~\ref{f:gaussian-gamma-example} plots $\pi_{z^{(b)}}$, $b=1, 2, 3$ and $\pi_{z^n}$ for the case of $\Theta=0$, $n_b=5b$, and $\tau^2=2^2$. As expected, the full-data IM contour is tighter than the block-specific contours. 
\end{ex}

\begin{figure}[t]
\centering
\begin{subfigure}[t]{0.49\textwidth}
\includegraphics[width=\textwidth]{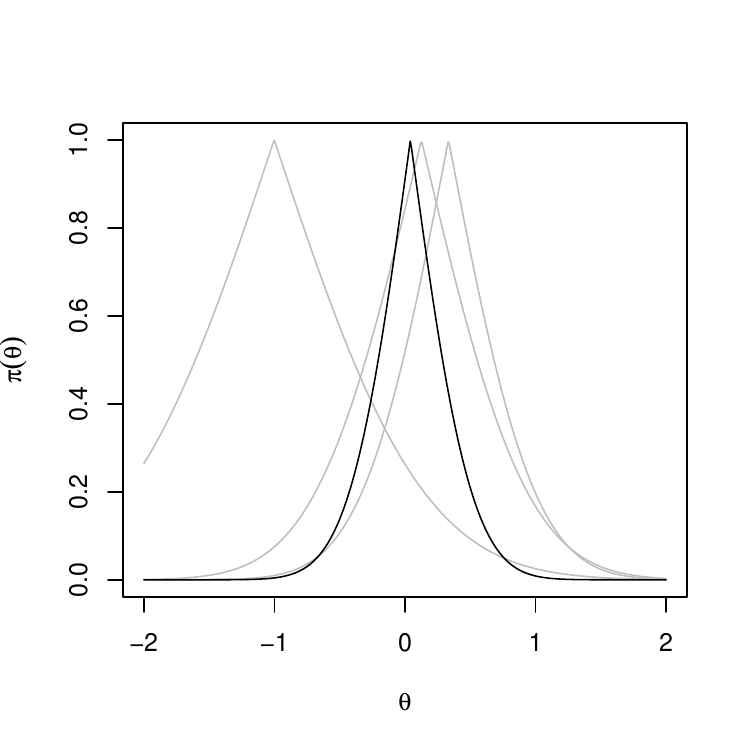}
\end{subfigure}\hfill
\begin{subfigure}[t]{0.49\textwidth}
\includegraphics[width=\textwidth]{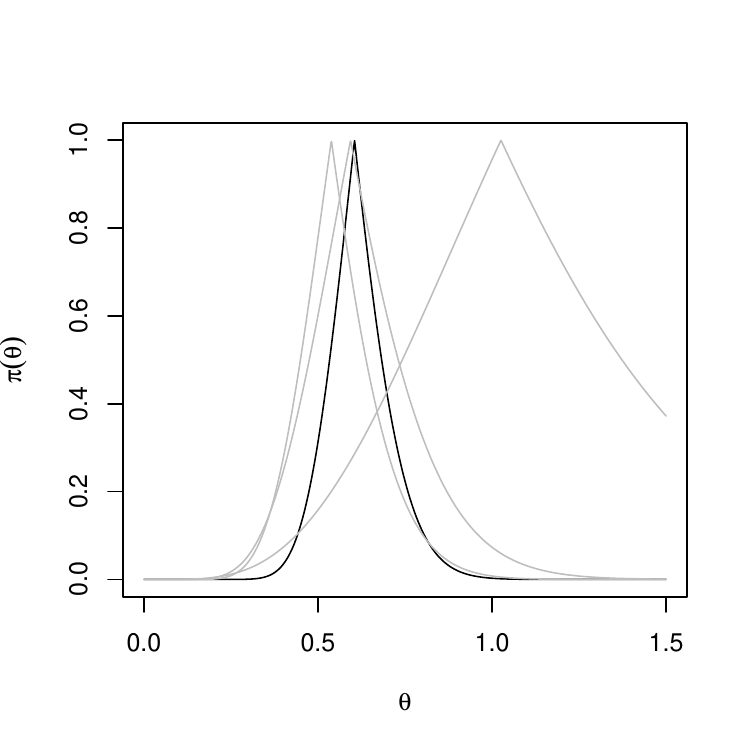}
\end{subfigure}
\caption{Full data (black) and block-specific (grey) possibility contours.  The left panel shows the Gaussian example and the right panel the Exponential example.} \label{f:gaussian-gamma-example}
\end{figure}

\begin{ex}
\label{ss:expo}
Let $Z^n$ consist of $n$ iid Exponential samples with unknown rate $\Theta > 0$.  Split the data into $B$ blocks of size $n_1,\ldots,n_B$, then the block-specific relative likelihood is 
\[ R(z^{(b)}, \theta) = ( \theta \hat{\theta}_{z^{(b)}}^{-1} )^{n_b} \exp\{ n_b(1-\theta \hat{\theta}_{z^{(b)}}^{-1}) \}, \quad \theta > 0, \quad b=1,\ldots,B, \]
where $\hat\theta_{z^{(b)}}$ is the block-$b$ sample mean's reciprocal.  The corresponding IM contour is 
\begin{equation}
\label{eq:likeli.ratio.exp}
\pi_{z^{(k)}}(\theta) = \prob_{\theta} \{ R(Z^{(k)}, \theta) \leq R(z^{(k)}, \theta) \}, \quad \theta > 0, \quad k=1,\ldots,K, 
\end{equation}
but there is no simple closed-form expression for this like there was in the previous Gaussian case.  There is, however, a not-so-simple closed-form expression for the contour---see Appendix~\ref{p:gamma-IM-proof}---and standard software can be readily used to approximate it. For illustration, the right panel of Figure~\ref{f:gaussian-gamma-example} plots $\pi_{z^{(b)}}$, $b=1,2,3$ and $\pi_{z^n}$ in an example from the Exponential model with $\Theta=0.5$ and $n_b=5b$. The efficiency improvement of the full data IM compared to the individual IMs is evident, as expected.
\end{ex}

\subsection{A large-sample divide-and-conquer IM} 
\label{s:asymptotic-rule}

The full data maximum likelihood estimator and the relative likelihood generally cannot be expressed as a function of the summary statistics alone; the Gaussian case is one exception. So, the classic approach to divide-and-conquer inference just \emph{assumes} that the block-specific maximum likelihood estimators are Gaussian and mimics the derivation in Example~\ref{ss:gaussian} above to obtain \citep[e.g.,][]{Hector-etal-2024, hedges1983combining}, 
\begin{align}
\check{\theta}_{S^n} = J_{S^n}^{-1} \sum_{b=1}^B J_{Z^{(b)}} \hat\theta_{Z^{(b)}},
\label{e:meta-def}
\end{align}
which is a weighted average of the block-specific maximum likelihood estimators. If the $\hat\theta_{Z^{(b)}}$'s are exactly independent Gaussian with mean $\Theta$, then $\check{\theta}_{S^n}$ is the best linear unbiased estimator of $\Theta$ and its covariance matrix is $J_{S^n}^{-1}$---which actually does not depend on the summary statistics $S^n$ in this Gaussian case.  

The property above that holds exactly in the Gaussian case holds more generally, at least approximately, when the sample size is large \citep[e.g.,][]{hedges1981distribution}.  That is, if all of the block sample sizes are sufficiently large, e.g., as in equation \eqref{eq:n.condition} below, then $\check\theta_{S^n}$ is approximately Gaussian with mean $\Theta$ and its variance can be consistently estimated by $J_{S^n}^{-1}$, which is of order $n^{-1}$.  Moreover, $\check\theta_{S^n}$ is asymptotically efficient in the sense that its asymptotic covariance matrix agrees with that of the maximum likelihood estimator based on the full sample $Z^n$; see Appendix~\ref{a:large.sample} and \cite{hedges1981distribution}.  

Next, we develop an analogue of this classical result that leans on the asymptotic Gaussianity of the block-specific maximum likelihood estimators to construct an asymptotically valid and efficient IM for $\Theta$ based on the summaries in $S^n$ alone.  The construction is very simple---we literally mimic the formulas in the exactly Gaussian case. The large-$n$ divide-and-conquer IM's possibility contour is
\begin{equation}
\label{eq:pi.naive}
\pi_{s^n}^\infty(\theta) = 1 - F_p\bigl\{ (\check\theta_{s^n} - \theta)^\top J_{s^n} (\check\theta_{s^n} - \theta) \bigr\}, \quad \theta \in \TT, 
\end{equation}
where $F_p$ is the $\chisq(p)$ distribution function. This IM construction depends on the full data $z^n$ only through the summary statistics in $s^n$, whereas the relative likelihood based on the full data $z^n$ cannot be evaluated with the summaries alone, so we have obviously given something up.  As the superscript ``$\infty$'' indicates, what we sacrifice for the extreme simplicity of this solution is that it is only finite-sample valid.  Remarkably, the asymptotically valid inferences obtained from this large-$n$ divide-and-conquer IM are efficient, as they asymptotically align closely with the full-data IM inferences, which are themselves asymptotically optimal \citep{martin2025asymptotic}. 

\begin{thm}
\label{thm:naive-efficiency}
Under the standard regularity conditions  stated in Appendix~\ref{a:large.sample}, if all the block sample sizes are increasing to infinity at the same rate, in the sense that 
\begin{equation}
\label{eq:n.condition}
n^{-1} n_b \to a_b >0, \quad n \to \infty, \quad b=1,\ldots,B, \quad \text{where} \; \textstyle\sum_{b=1}^B a_b = 1, 
\end{equation}
then the two possibilistic IM contours, $\pi_{s^n}^\infty$ and $\pi_{Z^n}$, merge locally uniformly, i.e., 
\begin{align*}
\sup_{u \in C} \bigl| \pi_{s^n}^\infty(\hat\theta_{Z^n} + J_{Z^n}^{-1/2} u) - \pi_{Z^n}(\hat\theta_{Z^n} + J_{Z^n}^{-1/2}u) \bigr| \to 0, \quad \text{in $\prob_{\Theta}$-probability, $n \to \infty$},
\end{align*}
where $C \subset \RR^p$ is an arbitrary compact set. 
\end{thm}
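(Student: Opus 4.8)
The plan is to reduce the claim to the statement that \emph{both} contours, regarded as functions of the local coordinate $u$, converge locally uniformly to one common Gaussian limit contour, and then to combine the two convergences by the triangle inequality. Write $\theta_u = \hat\theta_{Z^n} + J_{Z^n}^{-1/2}u$ for the local reparametrization and let $\pi^\star(u) = 1 - F_p(\|u\|^2)$ denote the candidate limit. Since the quantity in the theorem is bounded by the sum of the two deviations from $\pi^\star$, it suffices to establish
\begin{align}
\sup_{u \in C} \bigl| \pi_{s^n}^\infty(\theta_u) - \pi^\star(u) \bigr| \to 0
\quad\text{and}\quad
\sup_{u \in C} \bigl| \pi_{Z^n}(\theta_u) - \pi^\star(u) \bigr| \to 0
\label{e:two-halves}
\end{align}
in $\prob_\Theta$-probability.

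For the first limit in \eqref{e:two-halves} I would argue directly, since $\pi_{s^n}^\infty(\theta_u) = 1 - F_p\{Q_n(u)\}$ with $Q_n(u) = (\check\theta_{s^n} - \theta_u)^\top J_{s^n} (\check\theta_{s^n} - \theta_u)$ a deterministic function of the summaries. Factoring $J_{s^n} = J_{Z^n}^{1/2}(J_{Z^n}^{-1/2} J_{s^n} J_{Z^n}^{-1/2})J_{Z^n}^{1/2}$ and setting $w_n(u) = J_{Z^n}^{1/2}(\check\theta_{s^n} - \theta_u) = J_{Z^n}^{1/2}(\check\theta_{s^n} - \hat\theta_{Z^n}) - u$, one has $Q_n(u) = w_n(u)^\top M_n w_n(u)$ with $M_n = J_{Z^n}^{-1/2} J_{s^n} J_{Z^n}^{-1/2}$. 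Two facts from Appendix~\ref{a:large.sample}, both using the balanced-rate condition \eqref{eq:n.condition}, close this out: (i) $n^{-1} J_{s^n} = \sum_{b} (n_b/n)(J_{Z^{(b)}}/n_b) \to \mathcal{I}(\Theta)$ and $n^{-1} J_{Z^n} \to \mathcal{I}(\Theta)$, so $M_n \to I_p$ in probability; and (ii) $\check\theta_{s^n}$ and $\hat\theta_{Z^n}$ admit the same first-order influence-function expansion, so $J_{Z^n}^{1/2}(\check\theta_{s^n} - \hat\theta_{Z^n}) = o_p(1)$ and hence $w_n(u) = -u + o_p(1)$. Because both remainders are constant in $u$, $C$ is bounded, and $F_p$ is uniformly continuous, $Q_n(u) \to \|u\|^2$ and $\pi_{s^n}^\infty(\theta_u) \to \pi^\star(u)$ uniformly over $C$.

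For the second limit in \eqref{e:two-halves} I would use the probability-to-possibility representation $\pi_{Z^n}(\theta) = 1 - F_{n,\theta}\bigl(-2\log R(Z^n, \theta)\bigr)$, where $F_{n,\theta}$ is the exact sampling distribution function, under $\prob_\theta$, of the random variable $-2\log R(\cdot, \theta)$. Two ingredients combine: a quadratic local-asymptotic-normality expansion of the observed relative likelihood, $-2\log R(Z^n, \theta_u) = \|u\|^2 + o_p(1)$ uniformly on $C$; and a locally uniform Wilks theorem, namely $\sup_{t}|F_{n,\theta}(t) - F_p(t)| \to 0$ uniformly over $\theta \in \{\theta_u : u \in C\}$. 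With the uniform continuity of $F_p$, these give $\pi_{Z^n}(\theta_u) \to 1 - F_p(\|u\|^2) = \pi^\star(u)$ uniformly on $C$; this is exactly the local asymptotic-efficiency behavior of the oracle IM established in \cite{martin2025asymptotic}, which I would cite rather than reprove.

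The main obstacle is this second half. The delicacy is that the transform evaluates the sampling distribution at the very parameter value $\theta_u$ being assessed, which is both random, through $\hat\theta_{Z^n}$, and varying with $u$; controlling $F_{n,\theta_u}$ therefore demands a Wilks-type convergence that is uniform over a data-dependent, shrinking neighborhood rather than at a single fixed $\theta$, and this must mesh with the quadratic expansion of the observed $-2\log R$ taken along the same neighborhood. By contrast, the first half is essentially algebra once the two appendix facts are in hand. With both halves of \eqref{e:two-halves} in place, the triangle inequality yields the claimed locally uniform merging, completing the proof.
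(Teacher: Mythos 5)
Your proposal is correct and follows essentially the same route as the paper's proof: the same triangle-inequality reduction to the common Gaussian limit contour $\gamma(u) = 1 - F_p(\|u\|^2)$, an appeal to the possibilistic Bernstein--von Mises result of \cite{martin2025asymptotic} for the full-data contour, and a direct argument for the divide-and-conquer contour based on $J_{Z^n}^{1/2}(\check\theta_{S^n}-\hat\theta_{Z^n}) = o_P(1)$ (the paper's Lemma on asymptotic equivalence of the two estimators) together with $J_{Z^n}^{-1/2} J_{S^n} J_{Z^n}^{-1/2} \to I_p$ under condition \eqref{eq:n.condition}. Your factorization $Q_n(u) = w_n(u)^\top M_n w_n(u)$ with uniform continuity of $F_p$ is just a cosmetic repackaging of the paper's three-term expansion bounded via the mean-value theorem, so the two arguments are the same in substance.
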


This is a strong approximation: the large-$n$ divide-and-conquer IM possibility contour is close to that of the full data IM, uniformly over $\theta$ values in a $n^{-1/2}$-neighborhood of $\Theta$. Figure~\ref{f:meta-IM}(a) illustrates this by plotting, in red, the large-$n$ divide-and-conquer IM contour from \eqref{eq:pi.naive} in the Exponential example of Section~\ref{ss:expo}, with $J_{Z^{(b)}} = n_i \hat{\theta}_{Z^{(b)}}^{-2}$. As before, the black curve corresponds to the full data IM contour, while the gray curves correspond to the individual IMs contours from the $B=3$ blocks. The large-$n$ divide-and-conquer IM is impressively efficient, but recall that its validity is guaranteed only asymptotically. To confirm this, Figure~\ref{f:meta-IM}(b) plots the (empirical) distribution function of $\pi_{Z^n}(\Theta)$ and $\pi_{s^n}^\infty(\Theta)$ where we simulate $10{,}000$ datasets from the Exponential distribution with rate parameter $\Theta = 0.5$  and evaluate the contours at $\Theta=0.5$ in each Monte Carlo replicate. 
We observe that the large-$n$ divide-and-conquer IM is (slightly) stochastically smaller than $\unif(0,1)$, indicating that it is not valid in the present setting with $n_b=5b$. To illustrate a consequence of this invalidity, we compute the confidence intervals $\{\theta: \pi_{s^n}^\infty(\theta) >\alpha\}$ for each of the $10{,}000$ Monte Carlo replicates. The empirical coverage probability, reported in Table~\ref{t:gamma-CP}, is the proportion of the $10{,}000$ computed intervals that contain the true value $\Theta=0.5$. As a preview of the results to come, Table~\ref{t:gamma-CP} also presents the empirical coverage probability of confidence intervals based on our proposed finite-sample valid IM, which will be introduced in Section~\ref{s:proposal}, with contour $\pi_{s^n}^\forall$.  
The empirical coverage probabilities of confidence intervals based on the large-$n$ divide-and-conquer contour fall below the nominal level, indicating inflated type-I error rates. The valid contour, on the other hand, tracks the nominal level closely (within margins of Monte Carlo error) across $\alpha$.

\begin{figure}[t]
\centering
\begin{subfigure}[t]{0.49\textwidth}
\includegraphics[width=\textwidth]{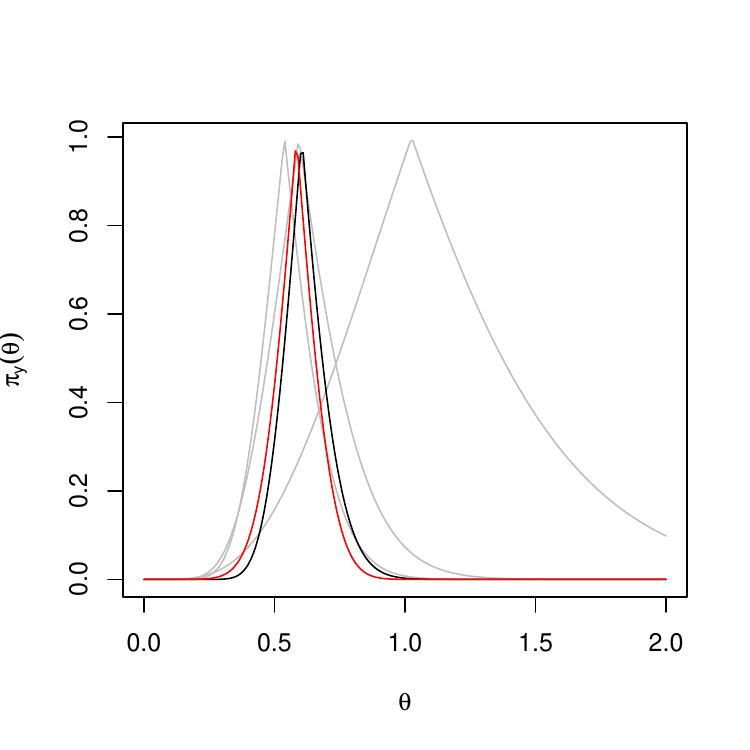}
\end{subfigure}\hfill
\begin{subfigure}[t]{0.49\textwidth}
\includegraphics[width=\textwidth]{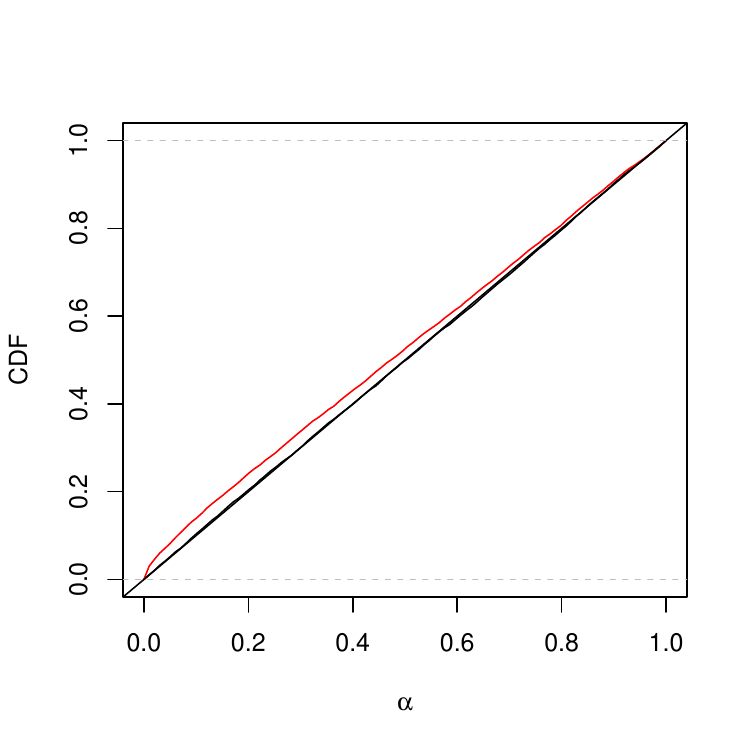}
\end{subfigure}
\caption{Panel~(a): Full data (black), individual (grey) and large-$n$ divide-and-conquer (red) possibility contours in the Exponential example. Panel~(b): Empirical distribution function of the full data (black) and large-$n$ divide-and-conquer (red) possibility contours in the Exponential example evaluated at $\Theta=0.5$ based on $10{,}000$ Monte Carlo samples.} \label{f:meta-IM}
\end{figure}

\begin{table}[t]
\centering
\caption{Empirical coverage probability (in \%) for the Exponential example simulation.} \label{t:gamma-CP}
\ra{0.8}
\begin{tabular}{rrrrrrrrrr}
Contour & \multicolumn{9}{c}{$100(1-\alpha)\%$} \\
& $10$ & $20$ & $30$ & $40$ & $50$ & $60$ & $70$ & $80$ & $90$\\
\midrule
$\pi_{s^n}^\infty$ & $9.180$ & $19.19$ & $28.78$ & $38.29$ & $47.40$ & $56.95$ & $66.19$ & $75.90$ & $86.08$ \\
$\pi_{s^n}^\forall$ & $9.790$ & $20.40$ & $30.47$ & $40.46$ & $50.17$ & $60.22$ & $69.91$ & $79.63$ & $89.66$
\end{tabular}
\end{table}

We have emphasized above that the large-$n$ divide-and-conquer IM is generally only valid asymptotically, but exact validity is not strictly impossible.  Of course, as our Gaussian-based motivation would suggest, the large-$n$ divide-and-conquer IM is exactly valid when the data are exactly Gaussian.  Moreover, it is easy to verify based on the formula in \eqref{eq:pi.naive} for $\pi_{s^n}^\infty$ that validity holds whenever the random variable $(\check\theta_{S^n} - \Theta)^\top J_{S^n} (\check\theta_{S^n} - \Theta)$ is stochastically no larger than $\chisq(p)$.  So, whenever the sampling distribution of $\check\theta_{S^n}$ is Gaussian or suitably ``sub-Gaussian,'' exact validity is expected.  This is an interesting observation, but ``sub-Gaussianity'' is far too restrictive for us to be satisfied, so we continue our quest.  The next section builds on this experience, offering a new divide-and-conquer IM that is exactly valid and asymptotically efficient. 

\section{Practical divide-and-conquer IMs} 
\label{s:proposal}

\subsection{Ranking via Gaussian working likelihoods}
\label{ss:middle}

Above we examined the large-$n$ divide-and-conquer IM that is based on simply mimicking the calculations that hold in the Gaussian case.  This alleviates all the inherent challenges associated with combining the block-specific information, since the optimal combination rule in the Gaussian case is well-established.  We also demonstrated that this large-$n$ divide-and-conquer IM asymptotically merges with the full-data IM.  Since the full data IM is both valid and efficient, the aforementioned merging implies that the large-$n$ divide-and-conquer IM is also {\em asymptotically} valid and efficient.  But a centerpiece of statistical inference is finite-sample validity---statisticians' Hippocratic Oath---so, to us, asymptotic validity is not enough.  Here we offer a middle ground strategy that takes some of what makes the large-$n$ divide-and-conquer IM good and leaves some of what makes it na\"{i}ve. This allows us to achieve the desired finite-sample validity with only a small increase in complexity compared to the large-$n$ divide-and-conquer IM.  

The key observation is that the large-$n$ divide-and-conquer IM uses a Gaussianity assumption twice: once in the ranking step, via the choice of relative likelihood, and again in the validification step.  That is, the proposal $\pi_{s^n}^\infty$ in \eqref{eq:pi.naive} can be expressed as 
\[ \pi_{s^n}^\infty(\theta) = \prob_\theta^\text{gauss}\{ R^\text{gauss}(S^n, \theta) \leq R^\text{gauss}(s^n, \theta) \}, \quad \theta \in \TT, \]
where the {\em Gaussian working relative likelihood} used for ranking is 
\begin{equation}
\label{e:gaussian-relative-likelihood}
R^\text{gauss}(s^n, \theta) = \exp\{-\tfrac12 (\check{\theta}_{s^n} - \theta)^\top J_{s^n} (\check{\theta}_{s^n} - \theta) \}, \quad \theta \in \TT, 
\end{equation}
and $\prob_\theta^\text{gauss}$ is the Gaussian model for $S^n$ used for validification, under which $(\check{\theta}_{S^n} - \theta)^\top J_{S^n} (\check{\theta}_{S^n} - \theta)$ has a $\chisq(p)$ distribution.  But we are under no obligation to use the same Gaussian assumption in both the ranking and validification steps. Its use in the ranking step is natural, as the derived relative likelihood incorporates all available information in $S^n$. However, applying it in the validification step forces us to replace the true distribution of the data with the Gaussian approximation. This is unnecessary because the validification based on the posited statistical model does not require the ranking to be derived from this model. In other words, the validification step can be applied to virtually any choice of ranking. 

Our proposal is then to use the Gaussian assumption only in the ranking step, through the choice of the Gaussian working relative likelihood in \eqref{e:gaussian-relative-likelihood}, and then carry out the validification step using the posited statistical model instead of the Gaussian approximation:
\begin{align}
\pi_{s^n}^\forall(\theta) = \prob_{\theta} \bigl\{ R^\text{gauss}(S^n, \theta) \leq R^\text{gauss}(s^n, \theta) \}, \quad \theta \in \TT,
\label{e:proposal-fused-IM}
\end{align}
where $\prob_{\theta}$ is the posited statistical model, which determines a sampling distribution of $S^n$ depending on the generic, hypothesized value $\theta$ of the unknown $\Theta$.  
This is clearly a bona fide possibility contour since it reaches a maximum value of 1 at $\check\theta_{s^n}$. It is also finite-sample valid, as demonstrated in Theorem~\ref{thm:middle_validity} below. We refer to the corresponding IM as the {\em valid divide-and-conquer IM}. 

To summarize, the large-$n$ divide-and-conquer IM is based on applying a Gaussian assumption in both the ranking and validification steps.  This results in a very simple solution, which happens to be asymptotically efficient, but is typically not valid in finite samples. Our proposed valid divide-and-conquer IM uses the Gaussian approximation only in the ranking step, which allows for ranking based solely on block summaries while preserving the exact validity of the full-data IM.  The trade-off is that there is no simple expression for $\pi_{s^n}^\forall$ like there was for $\pi_{s^n}^\infty$.  Despite this, the valid divide-and-conquer contour in \eqref{e:proposal-fused-IM} retains much of the simplicity and interpretability that made the large-$n$ divide-and-conquer approach appealing.
For example, the weighted average of maximum likelihood estimators in \eqref{e:meta-def} remains fully and most plausible: $\pi_{s^n}^\forall (\check\theta_{s^n}) =1$. 

\subsection{Validity and efficiency} \label{ss:middle-properties}

As the ``validification'' terminology suggests, when we carry out the validification step using the posited model, rather than some (Gaussian) approximation, the result is an IM that is exact valid, not just asymptotically so; see Theorem~\ref{thm:middle_validity}.  This implies that our proposal achieves what we referred to above as the statisticians' Hippocratic Oath.  

\begin{thm}
\label{thm:middle_validity}
The valid divide-and-conquer IM with contour as in \eqref{e:proposal-fused-IM} is valid in the sense of \eqref{eq:IMvalidity}. That is, $\prob_\Theta\{ \pi_{s^n}^\forall(\Theta) \leq \alpha \} \leq \alpha$, for all $\alpha \in [0,1]$. 
\end{thm}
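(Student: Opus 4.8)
The plan is to recognize $\pi_{s^n}^\forall$ as a probability-to-possibility transform of the working ranking statistic through its own sampling distribution, so that validity collapses to the elementary fact that any distribution function, evaluated at a draw from that very distribution, is stochastically no smaller than $\unif(0,1)$. Crucially, this requires nothing about the ranking itself, only that the probability in \eqref{e:proposal-fused-IM} is computed under the same model $\prob_\Theta$ that generated the data.

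First I would fix the true value $\Theta \in \TT$ and, for each $\theta$, introduce the sampling distribution function of the Gaussian working relative likelihood under the posited model,
\[ G_\theta(r) = \prob_\theta\{ R^\text{gauss}(S^n, \theta) \le r \}, \quad r \in \RR. \]
Comparing with \eqref{e:proposal-fused-IM}, the contour is exactly this distribution function evaluated at the observed ranking value, $\pi_{s^n}^\forall(\theta) = G_\theta\bigl( R^\text{gauss}(s^n, \theta) \bigr)$. Substituting the random data $S^n$ and the true parameter therefore gives
\[ \pi_{S^n}^\forall(\Theta) = G_\Theta\bigl( R^\text{gauss}(S^n, \Theta) \bigr), \]
that is, the random variable $X := R^\text{gauss}(S^n, \Theta)$, which by construction has distribution function $G_\Theta$ under $\prob_\Theta$, passed through its own distribution function.

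Second I would invoke the general probability-integral-transform inequality: for any real random variable $X$ with distribution function $G$, one has $\prob\{ G(X) \le \alpha \} \le \alpha$ for all $\alpha \in [0,1]$. A short self-contained argument suffices: the level set $A_\alpha = \{ x : G(x) \le \alpha\}$ is a down-set, hence a half-line unbounded below, since $G$ is nondecreasing; writing $x_\alpha = \sup A_\alpha$, if $G(x_\alpha) \le \alpha$ then $A_\alpha = (-\infty, x_\alpha]$ and $\prob\{G(X) \le \alpha\} = \prob\{X \le x_\alpha\} = G(x_\alpha) \le \alpha$, while otherwise $A_\alpha = (-\infty, x_\alpha)$ and $\prob\{G(X) \le \alpha\} = \prob\{X < x_\alpha\} = G(x_\alpha^-) \le \alpha$ by monotonicity and right-continuity of $G$. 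Applying this with $X = R^\text{gauss}(S^n, \Theta)$ and $G = G_\Theta$ yields $\prob_\Theta\{ \pi_{S^n}^\forall(\Theta) \le \alpha \} \le \alpha$ for every $\alpha$, which is precisely \eqref{eq:IMvalidity}.

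I expect the only point requiring genuine care to be the possible discontinuity of $G_\Theta$: the distribution of $R^\text{gauss}(S^n, \Theta)$ may carry atoms, for instance when the block summaries are discrete, in which case $G_\Theta(X)$ is not exactly $\unif(0,1)$. This is exactly why the transform delivers the one-sided inequality rather than an equality, and it is the $\le$ in the lemma, together with the use of the left limit $G_\Theta(x_\alpha^-)$ via right-continuity, that absorbs the jumps. I would emphasize in closing that the argument is deliberately agnostic to the choice of ranking function, which is the precise formal content of the decoupling of ranking from validification stressed above and the reason $\pi_{s^n}^\forall$ is exactly valid where $\pi_{s^n}^\infty$ is only asymptotically so.
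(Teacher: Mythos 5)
Your proposal is correct and is essentially the paper's own argument: the contour in \eqref{e:proposal-fused-IM} is by construction the probability-to-possibility transform (validification) of the Gaussian working relative likelihood under the posited model, so validity at the true $\Theta$ reduces to the conservative probability-integral-transform inequality $\prob_\Theta\{G_\Theta(X) \leq \alpha\} \leq \alpha$ for $X = R^\text{gauss}(S^n,\Theta)$ with distribution function $G_\Theta$, an argument that is agnostic to the choice of ranking and handles atoms via the one-sided inequality. The paper itself gives no separate proof, deferring to the standard validification theory it cites in Section~\ref{ss:oracle-IM}; your write-up simply makes that standard argument explicit, including the careful treatment of discontinuities in $G_\Theta$.
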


An immediate consequence of the IM's validity is that the usual statistical procedures, i.e., hypothesis tests and confidence sets, control the frequentist error rates.  This is remarkable because, to our knowledge, no other divide-and-conquer procedure achieves this form of exact error rate control and at this level of generality.   

\begin{cor}[\citealt{martin2019false}] 
\label{cor:stat-inference}
The test that rejects the null hypothesis $H_0: \Theta \in A$ if and only if $\pi_{s^n}^\forall(A) \leq \alpha$ has size $\alpha$, and the set 
\begin{equation}
\label{eq:region}
C_{\alpha}^\forall(s^n) = \{ \theta \in \TT: \pi_{s^n}^\forall(\theta) > \alpha \}, \quad \alpha \in [0,1], 
\end{equation}
is a $100(1-\alpha)\%$ confidence region. That is,
\begin{align*}
\sup \limits_{\Theta \in A} \prob_{\Theta} ( \mbox{reject ``$H_0: \Theta \in A$''}) \leq \alpha \quad \text{and} \quad \inf \limits_{\Theta} \prob_{\Theta} \{ C_{\alpha}^\forall(S^n) \ni \Theta \} \geq 1-\alpha.
\end{align*}
\end{cor}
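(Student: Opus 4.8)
The plan is to deduce both claims directly from the strong validity established in Theorem~\ref{thm:middle_validity}, namely $\prob_\Theta\{\pi_{S^n}^\forall(\Theta) \leq \alpha\} \leq \alpha$ for all $\alpha \in [0,1]$. Throughout, I read $\pi_{s^n}^\forall(A)$ as the induced possibility measure $\sup_{\theta \in A}\pi_{s^n}^\forall(\theta)$, so that the set-level test statistic and the pointwise contour are linked by the elementary monotonicity inequality
\[ \pi_{s^n}^\forall(\Theta) \leq \pi_{s^n}^\forall(A) \quad \text{whenever } \Theta \in A. \]
This single observation is the hinge of the whole argument.

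For the size claim, I would fix an arbitrary $\Theta \in A$ and note that rejecting $H_0$ is the event $\{\pi_{S^n}^\forall(A) \leq \alpha\}$. By the monotonicity inequality above, on this event one also has $\pi_{S^n}^\forall(\Theta) \leq \alpha$, so that $\{\pi_{S^n}^\forall(A) \leq \alpha\} \subseteq \{\pi_{S^n}^\forall(\Theta) \leq \alpha\}$. Taking $\prob_\Theta$-probabilities and invoking strong validity gives
\[ \prob_\Theta(\text{reject } H_0) \leq \prob_\Theta\{\pi_{S^n}^\forall(\Theta) \leq \alpha\} \leq \alpha. \]
Since $\Theta \in A$ was arbitrary, taking the supremum over $\Theta \in A$ yields size at most $\alpha$, as required.

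For the coverage claim, I would observe that $\Theta \in C_\alpha^\forall(S^n)$ is, by the definition of the region in \eqref{eq:region}, exactly the event $\{\pi_{S^n}^\forall(\Theta) > \alpha\}$, whose complement is $\{\pi_{S^n}^\forall(\Theta) \leq \alpha\}$. Strong validity bounds the complement's probability by $\alpha$, so
\[ \prob_\Theta\{C_\alpha^\forall(S^n) \ni \Theta\} = 1 - \prob_\Theta\{\pi_{S^n}^\forall(\Theta) \leq \alpha\} \geq 1 - \alpha, \]
and taking the infimum over $\Theta$ preserves the bound.

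Honestly, there is no substantive obstacle here: the statement is a formal corollary of Theorem~\ref{thm:middle_validity}, and the only point requiring any care is correctly identifying $\pi_{s^n}^\forall(A)$ with the possibility measure and using its monotonicity to pass from the set-level rejection region to the pointwise validity bound. The heavy lifting was already done in establishing strong validity; the corollary merely repackages it in the familiar language of hypothesis tests and confidence sets, which is why it can be credited as a direct specialization of the general result of \citet{martin2019false}.
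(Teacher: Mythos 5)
Your proof is correct and is exactly the standard argument the paper invokes: the paper itself gives no explicit proof, deferring to \citet{martin2019false} and calling the result an ``immediate consequence'' of Theorem~\ref{thm:middle_validity}, and that immediate consequence is precisely your two steps — monotonicity of the possibility measure $\pi_{s^n}^\forall(A) = \sup_{\theta \in A}\pi_{s^n}^\forall(\theta)$ to reduce the set-level rejection event to the pointwise event, then the strong validity bound. Your reading of $\pi_{s^n}^\forall(A)$ as the induced possibility measure is the intended one, so there is no gap.
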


For a scalar $\Theta$ in a continuous $\prob_\Theta$, $C_{\alpha}^\forall(s^n)$ is obtained by reading off the $\theta$ values for which $\pi_{s^n}^\forall(\theta) = \alpha$, as illustrated in Appendix \ref{a:proposal}. Table~\ref{t:gamma-CP} reports the coverage probability of the corresponding confidence intervals, confirming their proper calibration.

While validity is crucial to the logic of statistical inference, this can be of very little practical value if the answers provided by valid methods are too conservative.  As a follow up to Theorem~\ref{thm:naive-efficiency}, next we establish that the finite-sample validity achieved by the proposed valid divide-and-conquer IM comes at no cost of statistical efficiency asymptotically.  That is, like the large-$n$ divide-and-conquer IM developed above, the proposed exactly-valid divide-and-conquer IM, with contour $\pi_{s^n}^\forall$, also merges asymptotically with the efficient, full-data IM, with contour $\pi_{Z^n}$.  

\begin{thm}\label{thm:middle_efficiency}
Under the same conditions as in Theorem~\ref{thm:naive-efficiency}, 
\begin{align*}
\sup_{u \in C} \bigl| \pi_{s^n}^\forall(\hat\theta_{Z^n} + J_{Z^n}^{-1/2} u) - \pi_{Z^n}(\hat\theta_{Z^n} + J_{Z^n}^{-1/2} u) \bigr| \to 0, \quad \text{in $\prob_{\Theta}$-probability, as $n \to \infty$},
\end{align*}
for any arbitrary compact set $C \subset \RR^p$. 
\end{thm}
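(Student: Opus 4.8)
The plan is to exploit the fact that $\pi_{s^n}^\forall$ and $\pi_{s^n}^\infty$ are built from \emph{the same} ranking function---the Gaussian working relative likelihood $R^\text{gauss}$ in \eqref{e:gaussian-relative-likelihood}---and differ only in the distribution used in the validification step. Since Theorem~\ref{thm:naive-efficiency} already shows that $\pi_{s^n}^\infty$ merges locally uniformly with $\pi_{Z^n}$, the triangle inequality reduces the claim to showing
\[ \sup_{u \in C} \bigl| \pi_{s^n}^\forall(\hat\theta_{Z^n} + J_{Z^n}^{-1/2} u) - \pi_{s^n}^\infty(\hat\theta_{Z^n} + J_{Z^n}^{-1/2} u) \bigr| \to 0 \quad \text{in $\prob_\Theta$-probability.} \]
First I would write both contours in terms of the single scalar statistic $T_\theta(S^n) = (\check\theta_{S^n} - \theta)^\top J_{S^n} (\check\theta_{S^n} - \theta)$. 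Because $R^\text{gauss}(S^n,\theta) \leq R^\text{gauss}(s^n,\theta)$ is equivalent to $T_\theta(S^n) \geq t_\theta(s^n)$, writing $t_\theta(s^n)$ for the observed value of $T_\theta$, we have $\pi_{s^n}^\forall(\theta) = \prob_\theta\{ T_\theta(S^n) \geq t_\theta(s^n) \}$, whereas the Gaussian model used to define $\pi_{s^n}^\infty$ forces $T_\theta(S^n) \sim \chisq(p)$, giving $\pi_{s^n}^\infty(\theta) = 1 - F_p\{ t_\theta(s^n) \}$.

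Let $G_{n,\theta}$ denote the distribution function of $T_\theta(S^n)$ under $\prob_\theta$. Assuming $\prob_\theta$ is continuous so that $G_{n,\theta}$ has no atom at $t_\theta(s^n)$, the difference of the two contours is exactly
\[ \pi_{s^n}^\forall(\theta) - \pi_{s^n}^\infty(\theta) = F_p\{ t_\theta(s^n) \} - G_{n,\theta}\{ t_\theta(s^n) \}. \]
It therefore suffices to establish the locally-uniform-in-$\theta$, Polya-type convergence
\[ \sup_{\theta \in N_n} \sup_{t \in \RR} \bigl| G_{n,\theta}(t) - F_p(t) \bigr| \to 0, \]
where $N_n$ is the random neighborhood $\{ \hat\theta_{Z^n} + J_{Z^n}^{-1/2} u : u \in C \}$: once this holds, the displayed difference is bounded by the supremum above, uniformly over $u \in C$, irrespective of where the observed $t_\theta(s^n)$ lands. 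Since $\hat\theta_{Z^n} = \Theta + O_{\prob_\Theta}(n^{-1/2})$ and $J_{Z^n}^{-1/2} = O(n^{-1/2})$ under the regularity conditions, $N_n$ is contained in a fixed deterministic $O(n^{-1/2})$-neighborhood of $\Theta$ with probability tending to one, so I may replace $N_n$ by this deterministic neighborhood on that event.

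The core step is thus a locally uniform central limit theorem for the quadratic form $T_\theta(S^n)$: under the regularity conditions of Appendix~\ref{a:large.sample}, the aggregated estimator $\check\theta_{S^n}$ is asymptotically $\nm(\theta, J_{S^n}^{-1})$ and $J_{S^n}$ is consistent, \emph{uniformly over $\theta$ ranging in shrinking $O(n^{-1/2})$-neighborhoods of $\Theta$}. I expect this uniform-over-local-$\theta$ statement to be the main obstacle. I would obtain it by combining the blockwise asymptotic normality already invoked for Theorem~\ref{thm:naive-efficiency} with a contiguity/local-asymptotic-normality argument: for $\theta = \Theta + O(n^{-1/2})$ the laws $\prob_\theta$ and $\prob_\Theta$ are mutually contiguous, so the weak convergence $T_\theta(S^n) \Rightarrow \chisq(p)$ that holds at the fixed truth transfers across the whole local neighborhood, while continuity of the limit $F_p$ upgrades pointwise-in-$t$ convergence to uniform-in-$t$ convergence by Polya's theorem. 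Controlling the randomness of $J_{S^n}$ inside the quadratic form, together with its relation to the $J_{Z^n}$ appearing in the evaluation points, is a secondary technical point handled by the same consistency and rate arguments used in Appendix~\ref{a:large.sample}. Assembling this with the triangle inequality and Theorem~\ref{thm:naive-efficiency} then yields the claim.
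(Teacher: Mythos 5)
Your proposal is correct in outline and hinges on exactly the same technical fact as the paper's proof, but it reaches that fact by a somewhat different route. The paper bounds $\bigl|\pi_{s^n}^\forall - \pi_{Z^n}\bigr|$ by inserting the limiting Gaussian contour $\gamma(u) = 1 - F_p(\|u\|^2)$ as the middle term, disposing of $\bigl|\pi_{Z^n} - \gamma\bigr|$ via the possibilistic Bernstein--von Mises theorem of \cite{martin2025asymptotic}; you instead insert $\pi_{s^n}^\infty$ and invoke Theorem~\ref{thm:naive-efficiency}, which is equivalent in substance, since Theorem~\ref{thm:naive-efficiency} is itself proved through $\gamma$ and the same Bernstein--von Mises result. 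For the remaining term, the paper evaluates the contour at the special points $\vartheta_n^u = \check\theta_{S^n} + J_{S^n}^{-1/2}u$, where the observed working relative likelihood equals $e^{-\frac12\|u\|^2}$ identically, and thereby bounds the difference by $\sup_{\theta \in \mathcal{T}} \| G_n^\theta - G\|_\infty$ over a fixed compact $\mathcal{T}$, where $G_n^\theta$ is the sampling distribution function of $R^\text{gauss}(S^n,\theta)$ under $\prob_\theta$ and $G$ its $\chisq(p)$-based limit. Your reduction---writing $\pi_{s^n}^\forall(\theta)$ and $\pi_{s^n}^\infty(\theta)$ as tail probabilities of the same observed statistic $t_\theta(s^n)$ under $G_{n,\theta}$ and $F_p$, respectively, and bounding their difference by the Kolmogorov distance regardless of where $t_\theta(s^n)$ lands---accomplishes the same thing and is arguably cleaner: it keeps the evaluation points $\hat\theta_{Z^n} + J_{Z^n}^{-1/2}u$ of the theorem statement throughout, whereas the paper silently switches to the points $\vartheta_n^u$ and leaves the (asymptotically negligible, by Lemma~\ref{lemma:meta}) reparametrization gap implicit. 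Both arguments then rest on the same locally uniform Wilks-type statement: the law of the quadratic form under $\prob_\theta$ converges to $\chisq(p)$ uniformly over neighborhoods of $\Theta$, with uniformity in the argument supplied by Polya's theorem.

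One caution about how you justify that core step: contiguity alone does not ``transfer'' weak convergence with the same limit law. For a local alternative $\theta_n = \Theta + h n^{-1/2}$, the statistic $T_{\theta_n}(S^n)$---centered at $\theta_n$---is asymptotically \emph{noncentral} chi-square under $\prob_\Theta$; what makes it $\chisq(p)$ under $\prob_{\theta_n}$ is Le Cam's third lemma (equivalently, regularity of $\check\theta_{S^n}$): the mean shift induced by changing the measure from $\prob_\Theta$ to $\prob_{\theta_n}$ exactly cancels the shift induced by re-centering at $\theta_n$. So you need LAN together with joint convergence of $n^{1/2}(\check\theta_{S^n} - \Theta)$ and the log-likelihood ratio, not contiguity by itself; your mention of LAN suggests you have this in mind, but the proof must say it. You also still need to pass from convergence along every local sequence to the supremum over the neighborhood; the paper's device---if the convergence holds for every sequence in a compact set, it holds for a maximizing sequence, hence the supremum vanishes---fills that hole. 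With those two repairs your argument is complete; for what it is worth, the paper's own treatment of the analogous step, which appeals to Wilks-type theorems holding along arbitrary sequences of true parameter values in a compact set, is at a comparable level of informality.
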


\subsection{Computation and evaluation} \label{ss:middle-computation}

At face value, evaluating the contour $\pi_{s^n}^\forall$ in \eqref{e:proposal-fused-IM} seems to require potentially very expensive Monte Carlo computations.  That is, to evaluate $\pi_{s^n}^\forall(\theta)$, first choose a target number $M$ of Monte Carlo samples and then do the following for each $m=1,\ldots,M$: draw a copy of $Z^{(b)}$ from $\prob_\theta$ and evaluate the corresponding $S_b$ for each $b=1,\ldots,B$; aggregate $(S_1,\ldots,S_B)$ and call it $S_{m,\theta}^n$, with the subscript to indicate both the replicate and the particular $\prob_\theta$ from which it was drawn.  With the samples $\{S_{m,\theta}^n: m=1,\ldots,M\}$, it is straightforward to approximate the contour at $\theta$:
\[ \pi_{s^n}^\forall(\theta) \approx \frac1M \sum_{m=1}^M \mathbbm{1} \{ R^\text{gauss}(S_{m,\theta}^n, \theta) \leq R^\text{gauss}(s^n, \theta)\}. \]
Two things make this potentially expensive, which we will deal with separately. The first is that one apparently needs a different set of Monte Carlo samples corresponding to each $\theta$ being plugged into the contour and the computational cost can quickly escalate even for a moderate dimensional $\Theta$. The second is that obtaining $M$ Monte Carlo replicates of the block-specific summary statistics can be very expensive. We first propose a strategy to address the first issue and revisit the second one below.

The Gaussian working relative likelihood is at least an approximate pivot, so the sampling distribution of $R^\text{gauss}(S^n, \theta)$, as a function of $S^n$ under $\prob_{\theta}$, is nearly constant in $\theta$.  In fact, in both the Gaussian and Exponential models presented earlier, the Gaussian working relative likelihood is an exact pivot.  This property allows for an efficient approximation of $\pi_{s^n}^\forall(\cdot)$ across the entire parameter space using just a single Monte Carlo sample of $S^n$ corresponding to a convenient choice $\theta^\dagger$---say $\check{\theta}_{s^n}$---of the unknown parameter $\Theta$. 
That is, for a target Monte Carlo size $M$, let $S_{m,\theta^\dagger}^n$ denote a sample of $S^n$ from its sampling distribution under $\prob_{\theta^\dagger}$, for $m=1,\ldots,M$.  Then approximate the contour by 
\begin{equation}
\label{eq:contour.mc}
\pi_{s^n}^\forall(\theta) \approx \frac{1}{M} \sum_{m=1}^M \mathbbm{1} \{ R^\text{gauss}(S_{m,\theta^\dagger}^n, \theta^\dagger) \leq R^\text{gauss}(s^n, \theta)\}, \quad \theta \in \TT. 
\end{equation}

A bit more generally, if the Gaussian working relative likelihood is only an approximate pivot, then the sampling distribution of $S^n$ under $\prob_\theta$ depends weakly on $\theta$. This may lead to a not-so-accurate approximation in \eqref{eq:contour.mc} when the evaluation point $\theta$ is not too close to the anchor $\theta^\dagger$. If the simple strategy above only works well ``locally,'' then it could be made more flexible by spreading a few anchors $\theta_1^\dagger, \theta_2^\dagger, \ldots$ around the parameter space (e.g., using a space-filling design) and then, to approximate the contour $\pi_{s^n}^\forall(\theta)$ at a particular $\theta$, choose the anchor closest to $\theta$ and apply the corresponding Monte Carlo approximation as in \eqref{eq:contour.mc} above. Alternative computational approaches for cases where there may be concerns about how close the Gaussian working relative likelihood is to being an approximate pivot are described in Appendix \ref{a:importance-sampling}.

We now discuss how to draw samples of the summary statistics $S_m^n=(S_{m,1},\ldots,S_{m,K})$ for $m=1,\ldots,M$. While obtaining $\hat{\theta}_{z^{(b)}}$ is certainly easier than obtaining $\hat{\theta}_{Z^n}$, we now require $M$ such computations for the approximation in equation \eqref{eq:contour.mc}, which may become very expensive. Furthermore, obtaining even a single evaluation of the summary statistic $S_{m,b}$ may be computationally infeasible when the likelihood is not only expensive but intractable. The strategy we propose to overcome this issue is to learn the expected likelihood ``map'' from data $Z^{(b)} \sim \prob_{\theta}$ to $\theta$ using black-box emulators, such as normalizing flows, and simulations from the posited model. To learn this map, we first draw training values of $\Theta$ from some training distribution and sample data from the model given the training values. Using these sampled values, we then train a black-box algorithm that learns the map from data to parameter values. Once the algorithm is trained, the observations $z^{(b)}$ are fed into the trained emulator to obtain draws from the likelihood emulator. We compute $\hat{\theta}_{z^{(b)}}$ and $J_{z^{(b)}}$ as the mean and inverse variance, respectively, of these draws. The inference is said to be ``amortized'' \citep{sainsbury2024likelihood, zammit2024neural, hector2024whole} because, once the up-front training cost of the emulator has been paid, evaluations of $S_b$ and $S_{m,b}$ are virtually free.

The introduction of our emulator raises an important question: if we can obtain maximum likelihood estimates using amortized inference, why not train a black-box emulator to do the same on the whole data? This is not a viable alternative for two reasons. First, while it may be feasible to obtain the full data maximum likelihood estimator $\hat{\theta}_{Z^n}$ using such an amortized approach, the difficulty of computing the full-data relative likelihood, $R(Z^n, \theta)$ remains. While some density estimation techniques can be used to estimate $L_{z^n}(\theta)$ by smoothing out draws from the emulator, this smoothing will suffer from the curse of dimensionality for even moderately sized problems.  Second, even if this were feasible, it would become prohibitively expensive to perform this smoothing for $M$ Monte Carlo replicates to evaluate $\pi_{z^n}(\theta)$ in equation \eqref{e:oracle-IM}. This comparison emphasizes the importance of using summaries $S_1, \ldots, S_B$ and the Gaussian working relative likelihood $R^\text{gauss}(s^n, \theta)$ in defining $\pi_{s^n}^\forall(\theta)$.

\subsection{Profile likelihood for marginal contours} \label{ss:profile}

The divide-and-conquer IM is designed to provide reliable (and fully conditional) uncertainty quantification about $\Theta$ in light of the observed summary statistics.  But it is often the case that interest is in some feature of $\Theta$ rather than $\Theta$ in its entirety.  Of course, uncertainty quantification about $\Theta$ implies that about a feature of $\Theta$, but there is still the question of how to achieve the latter in the proposed IM framework.  The general IM marginalization strategy, which we follow here, is described in \cite{martinpp3}.  For simplicity, we will focus on the case of marginal inference on a component $\Theta_q$, taking values in $\TT_q$, of the full parameter $\Theta$ in $\TT$.  

Let $\check\theta_{s^n,q}$ denote the $q^\text{th}$ component of the estimator $\check\theta_{s^n}$, and let $J_{s^n,q}$ denote the $q^\text{th}$ entry on the diagonal of the observed information matrix $J_{s^n}$.  Also, for a generic $\theta$, write $(\theta_q, \theta_{-q})$ for the $q^\text{th}$ component and everything else.  Then it is easy to verify that 
\begin{align*}
R_q^\text{gauss}(s^n, \theta_q) & = \sup_{\theta_{-q}} R^\text{gauss}\bigl\{ s^n, (\theta_q, \theta_{-q}) \bigr\} \\
& = \exp\{-\tfrac12 (\check{\theta}_{s^n,q} - \theta_q)^\top J_{s^n,q} (\check{\theta}_{s^n,q}- \theta_q) \}, \quad \theta_q \in \TT_q. 
\end{align*}
This is just the profiled version of the Gaussian working relative likelihood.  Following \cite{martinpp3}, we propose to validify the profiled Gaussian working likelihood:
\begin{align}
\pi_{s^n}^{\forall,q}(\theta_q) &= \sup_{\theta_{-q}} \prob_{(\theta_q, \theta_{-q})} \{ R_q^\text{gauss}(S^n, \theta_q) \leq R_q^{\text{gauss}} (s^n, \theta_q) \}, \quad \theta_q \in \TT_q.
\label{e:proposal-fused-IM-q}
\end{align}
We have carried out the validification step several times already in this paper, so this is mostly familiar by now.  The one difference here is the outer supremum over the nuisance parameters $\theta_{-q}$.  This is needed because, of course, there is no ``$\prob_{\theta_q}$'' and, hence, no sampling distribution of $S^n$ that only depends on $\theta_q$.  That is, we need to fix the value of $\theta=(\theta_q, \theta_{-q})$ to determine the sampling distribution of $S^n$, but, since $\theta_{-q}$ is not an argument into the possibility contour, validity considerations force us to optimize over the unspecified $\theta_{-q}$; further details can be found in \cite{martinpp3}.  

As in Section \ref{ss:middle-computation}, the contour \eqref{e:proposal-fused-IM-q} can be approximated using Monte Carlo:
\[ \pi_{s^n}^{\forall,q}(\theta_q) \approx \sup_{\theta_{-q}} \frac1M \sum_{m=1}^M \mathbbm{1} \{ R_q^\text{gauss}(S_{m,\theta_q,\theta_{-q}}^n, \theta_q) \leq R_q^{\text{gauss}} (s^n, \theta_q) \}, \quad \theta_q \in \TT_q. \]
The supremum over nuisance parameters on the outside adds considerable complexity to this calculation.  However, as above, we still fully expect that there is an approximate pivot structure in the sampling distribution of the Gaussian working relative likelihood.  That is, we do not expect the distribution of $R^\text{gauss}(S_{m,\theta}^n, \theta)$ to vary much with $\theta$.  So, as before, we can fix an anchor $\theta^\dagger$ (or perhaps several anchors spread around the parameter space) and draw samples of $S^n$ from $\prob_{\theta^\dagger}$ exclusively.  To summarize, we propose to approximate the contour of our marginal, divide-and-conquer IM for $\Theta_q$ as 
\begin{equation}
\label{eq:contour.mc-q}
\pi_{s^n}^{\forall,q}(\theta_q) \approx \frac{1}{M} \sum_{m=1}^M \mathbbm{1} \{ R_q^\text{gauss}(S_{m,\theta^\dagger}^n, \theta_q^\dagger) \leq R_q^{\text{gauss}} (s^n, \theta_q) \},
\quad \theta_q \in \TT_q,
\end{equation}
where, as before, $\{S_{m,\theta^\dagger}^n: m=1,\ldots,M\}$ consists of samples of summary statistics $S^n$ drawn from $\prob_{\theta^\dagger}$ corresponding to the fixed anchor $\theta^\dagger$. The fact that this approximation strategy works well is supported empirically in our numerical examples of Section \ref{s:simulations}. Of course, if the approximate pivot structure does not hold, estimation of $\pi_{s^n}^{\forall,q}$ can still be carried out using Monte Carlo samples at all values $\theta$ of interest. We find, however, that this isn't necessary in our examples.

\section{Numerical examples} \label{s:simulations}

\subsection{L\'{e}vy's alpha-stable distributions} \label{ss:alpha-stable-sims}

Alpha-stable distributions \citep{levy1925calcul} are commonly used in finance, economics, and physics. A family of distributions is said to be alpha-stable if it is closed under convolution. These distributions are defined by their characteristic function,
\begin{align*}
\phi(t) &= \exp[it\mu - |c t |^{\alpha} \{ 1-i \beta \, \text{sgn}(t) \, \Phi\} ],
\end{align*}
where $i=\sqrt{-1}$ is the imaginary unit, $\text{sgn}(t)$ is the sign of $t$, and $\Phi=\tan(\pi \alpha/2)$ if $\alpha \neq 1$ and $\Phi = -2\log|t|/\pi$ if $\alpha=1$. The parameter $\alpha \in (0,2]$ is a stability parameter, $\mu\in \RR$ is a location parameter, $c\in (0,\infty)$ is a scale parameter and $\beta \in [-1,1]$ is a ``skew'' parameter. The density is recovered through the inverse Fourier transform, $(2\pi)^{-1} \int_{-\infty}^{\infty} \phi(t) e^{-ixt} \, dt$,
from which it is evident that alpha-stable densities do not in general have closed-forms. The Cauchy, L\'{e}vy and Gaussian distributions are prominent examples of closed form solutions for specific values of $\alpha, \beta, \mu$ and $c$.

We consider the setup with $n=200$, $B=4$, $n_b \equiv 50$. We take $\alpha=1.5$ to be fixed, and data points are generated from the alpha-stable distribution with $\mu=0$, $c=0.5$ and $\beta=0$ and using an algorithm proposed in \cite{chambers01061976} and summarized in Appendix \ref{a:alpha-stable}. Let $\Theta=(\mu, c, \beta)$ the true value of the location, scale and skew parameters. Appendix~\ref{a:alpha-stable-MLE} illustrates the lack of validity guarantee of large-sample inference based on the full maximum likelihood estimator, which motivates the evaluation of the computationally intractable valid contour described in Section~\ref{ss:oracle-IM} and subsequently the valid divide-and-conquer inference. 

We train an emulator to learn the map between data of size $n_b=50$ and parameter values of $\Theta$ using a training distribution of $\Theta$ that is continuous uniform on the interval $[-20,20]$, $[0,10]$ and $[-1,1]$ for $\mu$, $c$ and $\beta$ respectively. The emulator is based on two chained invertible neural networks, trained jointly in the DeepSets framework using the BayesFlow software \citep{bayesflow1, bayesflow2} so as to be invariant to permutations of the observations. The first network learns a ten-dimensional summary statistic from the $50$-dimensional data inputs, while the second network (consisting of six coupling layers) learns the parameters from the summary statistic. The emulator is trained using the online algorithm of \cite{bayesflow2} that samples from the model on-the-fly to improve generalization. Once the emulator is trained, we compute $\hat{\theta}_{z^{(b)}}$ and $J_{z^{(b)}}$ as the mean and inverse variance, respectively, of $1{,}000$ draws from the emulator. We compute $\pi_{s_n}^{\forall,q}$ using equation \eqref{eq:contour.mc-q} with $\theta^\dagger=\check{\theta}_{s_n}$ and $M=3{,}000$ Monte Carlo samples for $q\in \{1,2,3\}$. 

Figure \ref{f:alpha-stable-contour} plots the contours for the large-$n$ and valid divide-and-conquer IMs for one replicate, and Figure \ref{f:alpha-stable-ecdf} plots the empirical distribution function of the large-$n$ and valid divide-and-conquer possibility contours based on $1{,}000$ replicates of $\pi_{S_n}^{\forall,q}$. The large-$n$ IM is invalid as the empirical distribution function is far above the diagonal line.

\begin{figure}[t]
\centering
\includegraphics[width=0.75\textwidth]{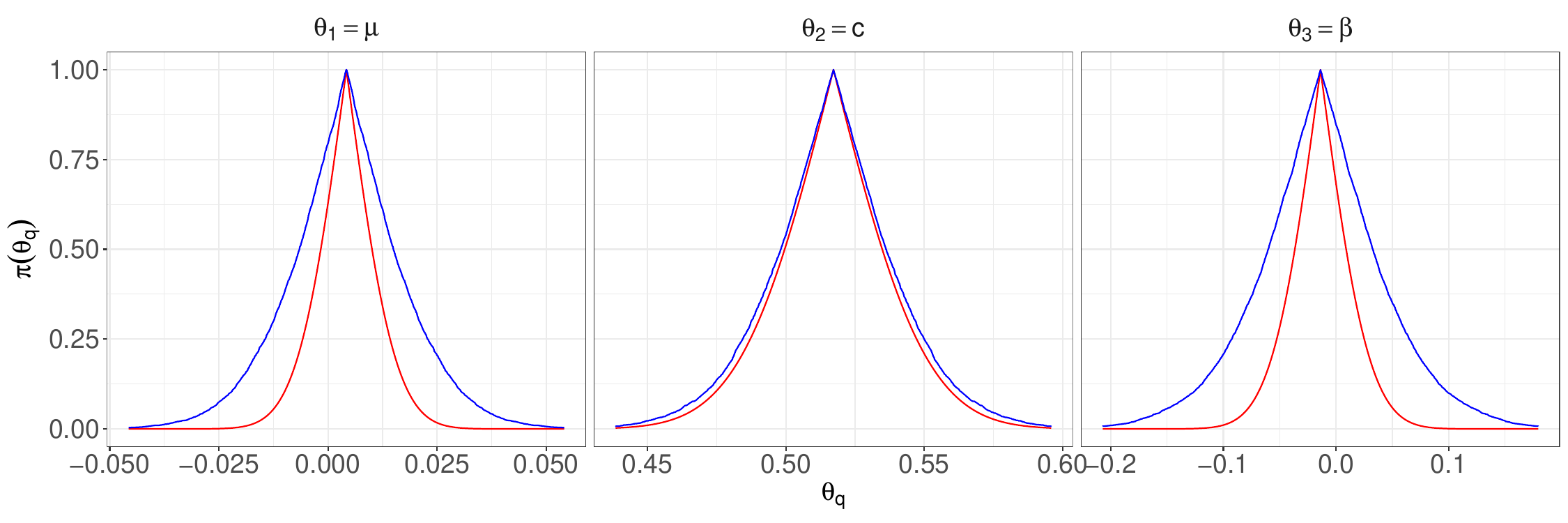}
\caption{Large-$n$ (red) and valid divide-and-conquer (blue) marginal possibility contours in the alpha-stable example.} \label{f:alpha-stable-contour}
\end{figure}

\begin{figure}[t]
\centering
\includegraphics[width=0.75\textwidth]{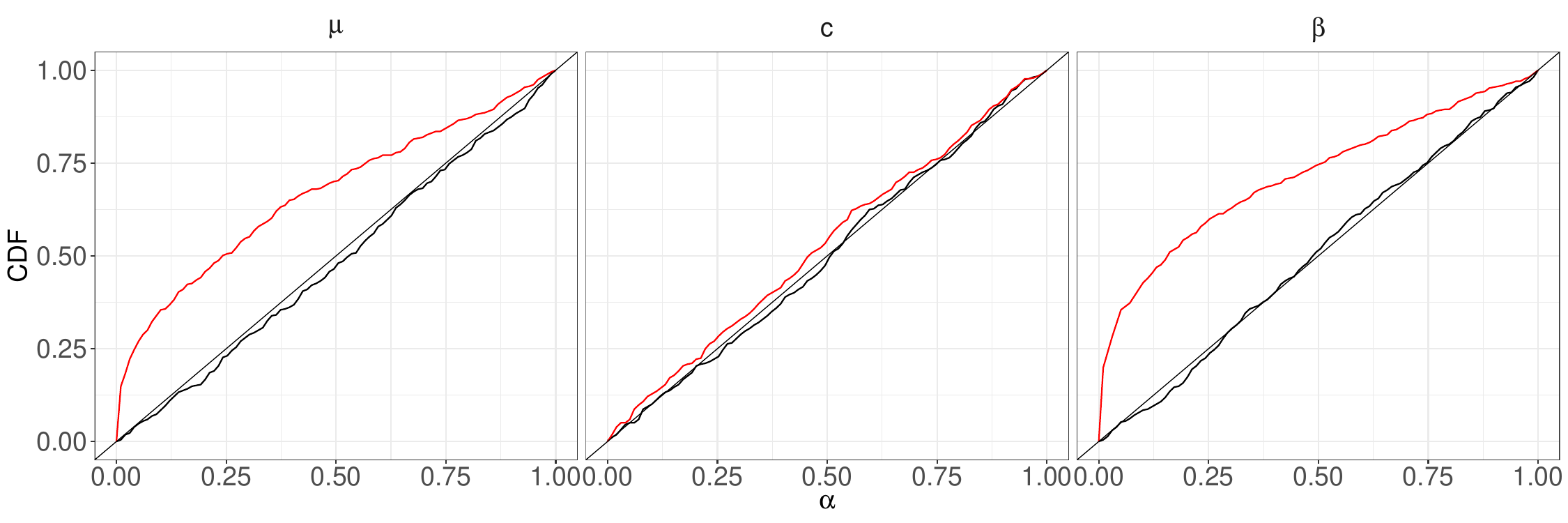}
\caption{Empirical CDF of the marginal valid divide-and-conquer (black) and large-$n$ (red) possibility contours in the alpha-stable example evaluated at $\Theta=(0,0.5,0)$ based on $1{,}000$ replicates.} \label{f:alpha-stable-ecdf}
\end{figure}

As an empirical check of Corollary \ref{cor:stat-inference}, we generate $1{,}000$ replicates of the large-$n$ and valid divide-and-conquer contours $\pi_{s^n}^\infty$ and $\pi_{s^n}^\forall$. For each replicate, we compute the $100(1-\alpha)\%$ marginal confidence intervals for $\Theta$ using $\{\theta_q \in \TT_q: \pi_{s^n}^{\infty,q}(\theta_q) > \alpha\}$ and $\{ \theta_q \in \TT_q: \pi_{s^n}^{\forall,q}(\theta_q)>\alpha\}$ at levels $\alpha \in \{0.1,0.2,\ldots,0.9\}$. The empirical coverage probability at level $100(1-\alpha)\%$, reported in Table \ref{t:alpha-stable-CP}, is the proportion of the $1{,}000$ computed intervals that contain the true value, $\Theta$. The empirical coverage probabilities track the nominal levels for the valid divide-and-conquer contour within margins of Monte Carlo standard error, but the large-$n$ contour substantially undercovers the true values of $\Theta$ at all nominal levels. Figure \ref{f:alpha-stable-contour} suggests, and Table \ref{t:alpha-stable-AL} confirms, that the large-$n$ confidence intervals are too narrow: the average length of the confidence intervals from the valid divide-and-conquer contour are larger than those from the large-$n$ contour. 

\begin{table}[t]
\caption{Simulation metrics for the alpha-stable simulations.} \label{t:alpha-stable}
\centering
\begin{subtable}[t]{\textwidth}
\caption{Empirical coverage probability (in \%).} \label{t:alpha-stable-CP}
\centering
\ra{0.8}
\begin{tabular}{rrrrrrrrrr}
Contour & \multicolumn{9}{c}{$100(1-\alpha)\%$} \\
& $10$ & $20$ & $30$ & $40$ & $50$ & $60$ & $70$ & $80$ & $90$\\
\midrule
$\pi_{s^n}^{\infty,1}$ & 4.90 & 11.0 & 17.2 & 22.0 & 27.7 & 33.4 & 39.9 & 48.8 & 59.3 \\
$\pi_{s^n}^{\infty,2}$ & 8.20 & 17.9 & 26.6 & 34.7 & 44.9 & 55.5 & 64.1 & 73.9 & 85.1 \\
$\pi_{s^n}^{\infty,3}$ & 7.00 & 13.7 & 19.4 & 24.9 & 31.0 & 36.7 & 45.4 & 53.9 & 64.2 \\
\midrule
$\pi_{s^n}^{\forall,1}$ & 11.7 & 22.4 & 32.6 & 42.3 & 51.7 & 61.6 & 71.6 & 81.2 & 90.8 \\
$\pi_{s^n}^{\forall,2}$ & 8.90 & 20.6 & 28.7 & 37.4 & 49.0 & 59.6 & 69.3 & 78.2 & 88.7 \\
$\pi_{s^n}^{\forall,3}$ & 13.3 & 23.8 & 33.0 & 43.8 & 52.5 & 61.9 & 71.4 & 81.4 & 91.1 \\
\end{tabular}
\end{subtable}\\
\vspace*{1em}
\begin{subtable}[t]{\textwidth}
\caption{Average confidence region length $\times 100$.} \label{t:alpha-stable-AL}
\centering
\ra{0.8}
\begin{tabular}{rrrrrrrrrr}
Contour & \multicolumn{9}{c}{$100(1-\alpha)\%$} \\
& $10$ & $20$ & $30$ & $40$ & $50$ & $60$ & $70$ & $80$ & $90$\\
\midrule
$\pi_{s^n}^{\infty,1}$ & 0.847 & 1.71 & 2.6 & 3.54 & 4.56 & 5.69 & 7.01 & 8.67 & 11.1 \\
$\pi_{s^n}^{\infty,2}$ & 0.654 & 1.32 & 2.01 & 2.73 & 3.52 & 4.39 & 5.4 & 6.68 & 8.58 \\
$\pi_{s^n}^{\infty,3}$ & 0.210 & 0.423 & 0.644 & 0.877 & 1.13 & 1.41 & 1.73 & 2.15 & 2.75 \\
\midrule
$\pi_{s^n}^{\forall,1}$ & 1.81 & 3.66 & 5.52 & 7.40 & 9.39 & 11.7 & 14.4 & 17.9 & 23.0 \\
$\pi_{s^n}^{\forall,2}$ & 0.723 & 1.47 & 2.23 & 2.99 & 3.84 & 4.85 & 5.96 & 7.33 & 9.34 \\
$\pi_{s^n}^{\forall,3}$ & 0.402 & 0.805 & 1.22 & 1.65 & 2.11 & 2.63 & 3.23 & 4.01 & 5.16 \\
\end{tabular}
\end{subtable}
\end{table}

\subsection{\textit{g}-and-\textit{k} distributions} \label{ss:g-and-k-sims}

We return to the $g$-and-$k$ distribution introduced in Section \ref{s:intro}. Let $\Theta=(\mu,\sigma,g,k)$ the true value of the location, scale, skew and kurtosis parameters. As is common in the existing literature \citep[see, e.g.][]{rayner2002numerical, drovandi2011likelihood}, we set $c=0.8$. We consider the setup with $n=200$, $B=4$ and $n_b \equiv 50$. Outcomes $y_j$ are generated from the $g$-and-$k$ distribution with $\mu=3$, $\sigma=1$, $g=2$ and $k=0.5$ using the R package \verb|gk| \citep{gk-package}. Appendix \ref{a:g-and-k-MLE} shows that, in this case, large-sample inference based on the full maximum likelihood estimator appears valid, although there is no guarantee. To guarantee validity, computationally intractable validification of the likelihood ratio would be needed. We train an emulator to learn the map between data of sample size $n_b=50$ and parameter values of $\Theta$ using a training distribution of $\Theta$ that is continuous uniform on the interval $[-20,20]$, $[-20,20]$, $[-5,5]$ and $[-1/2,5]$ for $\mu, \sigma, g$ and $k$ respectively. The emulator and computation of $\hat{\theta}_{z^{(b)}}$ and $J_{z^{(b)}}$ are as described in Section \ref{ss:g-and-k-sims}. We compute $\pi_{s_n}^{\forall,q}$ using equation \eqref{eq:contour.mc-q} with $\theta^\dagger=\check{\theta}_{s_n}$ and $M=3{,}000$ Monte Carlo samples for $q\in \{1,2,3,4\}$. 

Figure \ref{f:g-and-k-contour} plots the contours for the large-$n$ and valid divide-and-conquer IMs for one replicate, and Figure \ref{f:g-and-k-ecdf} plots the empirical distribution function of the large-$n$ and valid divide-and-conquer possibility contours based on $1{,}000$ replicates of $\pi_{S_n}^{\forall,q}$. The large-$n$ IM is invalid as the empirical distribution function is far above the diagonal line.

\begin{figure}[t]
\centering
\includegraphics[width=\textwidth]{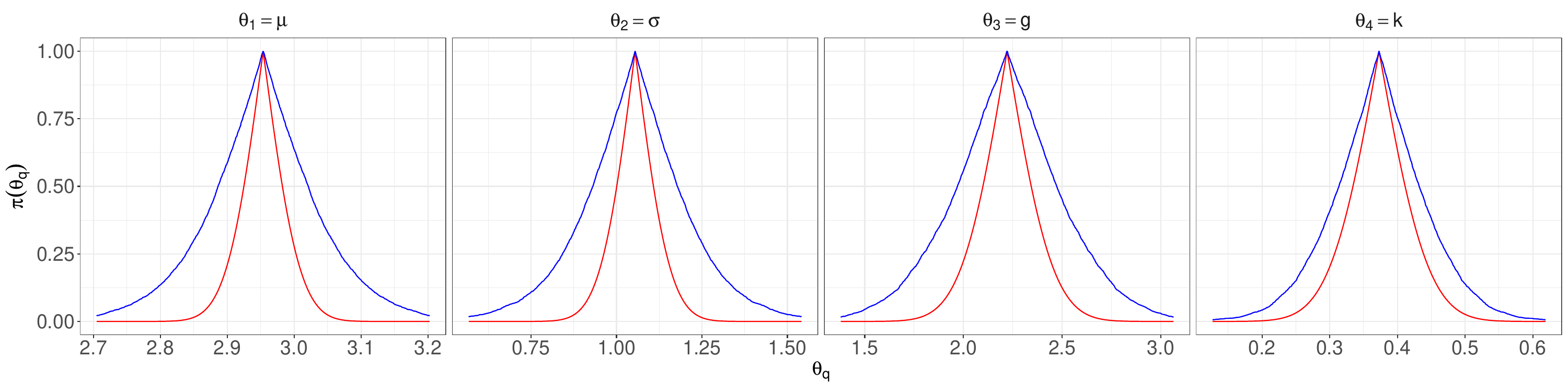}
\caption{Large-$n$ (red) and valid divide-and-conquer (blue) marginal possibility contours in the $g$-and-$k$ example.} \label{f:g-and-k-contour}
\end{figure}
\begin{figure}[t]
\centering
\includegraphics[width=\textwidth]{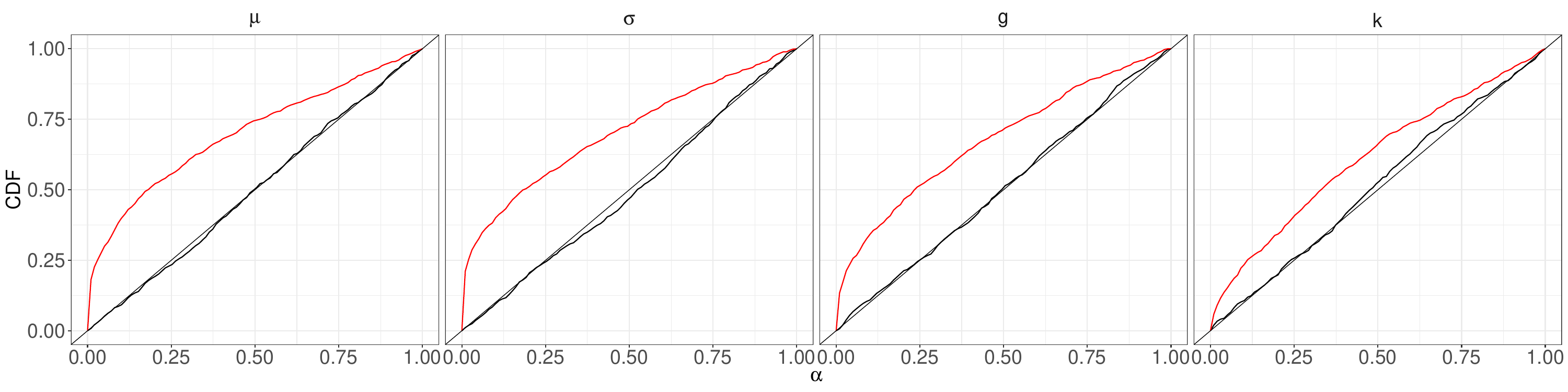}
\caption{Empirical CDF of the marginal valid divide-and-conquer (black) and large-$n$ (red) possibility contours in the $g$-and-$k$ example evaluated at $\Theta=(3,1,2,0.5)$ based on $1{,}000$ replicates.} \label{f:g-and-k-ecdf}
\end{figure}

As an empirical check of Corollary \ref{cor:stat-inference}, we generate $1{,}000$ replicates of the large-$n$ and valid divide-and-conquer contours $\pi_{s^n}^\infty$ and $\pi_{s^n}^\forall$. The marginal empirical coverage probability for $\Theta$ at level $100(1-\alpha)\%$, $\alpha \in \{0.1,0.2,\ldots,0.9\}$, reported in Table \ref{t:g-and-k-CP}, tracks the nominal level for the valid divide-and-conquer contour within margins of Monte Carlo standard error, but the large-$n$ contour substantially undercovers the true values of $\Theta$ at all nominal levels. Figure \ref{f:g-and-k-contour} suggests, and Table \ref{t:g-and-k-AL} confirms, that the large-$n$ confidence intervals are again too narrow. This highlights the importance of using our valid divide-and-conquer contour over the large-$n$ contour: if deployed in a real-world scenario where $\Theta$ is unknown, using the $90\%$ confidence interval based on the large-$n$ contour would unknowingly increase the type-I error rate by up to threefolds! In other words, there would be up to three times as many false discoveries.

\begin{table}[t]
\caption{Simulation metrics for the $g$-and-$k$ simulations.} \label{t:g-and-k}
\centering
\begin{subtable}[t]{\textwidth}
\caption{Empirical coverage probability (in \%).} \label{t:g-and-k-CP}
\centering
\ra{0.8}
\begin{tabular}{rrrrrrrrrr}
Contour & \multicolumn{9}{c}{$100(1-\alpha)\%$} \\
& $10$ & $20$ & $30$ & $40$ & $50$ & $60$ & $70$ & $80$ & $90$\\
\midrule
$\pi_{s^n}^{\infty,1}$ & 4.90 & 10.2 & 16.0 & 20.2 & 25.4 & 32.4 & 39.8 & 47.8 & 60.2 \\
$\pi_{s^n}^{\infty,2}$ & 4.80 & 9.20 & 14.3 & 19.6 & 27.0 & 33.3 & 41.1 & 48.9 & 60.2 \\
$\pi_{s^n}^{\infty,3}$ & 4.50 & 9.40 & 14.5 & 22.9 & 28.8 & 35.7 & 44.8 & 54.0 & 65.9 \\
$\pi_{s^n}^{\infty,4}$ & 7.00 & 13.7 & 19.8 & 26.4 & 33.5 & 43.0 & 53.4 & 65.8 & 76.5 \\
\midrule
$\pi_{s^n}^{\forall,1}$ & 9.80 & 19.8 & 30.4 & 41.1 & 52.5 & 63.6 & 73.9 & 82.7 & 92.0 \\
$\pi_{s^n}^{\forall,2}$ & 10.0 & 22.1 & 33.9 & 44.8 & 55.8 & 65.6 & 74.1 & 85.0 & 93.9 \\
$\pi_{s^n}^{\forall,3}$ & 8.20 & 19.3 & 29.8 & 41.3 & 52.6 & 62.4 & 72.0 & 81.3 & 90.4 \\
$\pi_{s^n}^{\forall,4}$ & 10.3 & 19.2 & 27.5 & 37.8 & 48.8 & 60.8 & 71.0 & 80.6 & 90.6 \\
\end{tabular}
\end{subtable}\\
\vspace*{1em}
\begin{subtable}[t]{\textwidth}
\caption{Average confidence region length $\times 100$.} \label{t:g-and-k-AL}
\centering
\ra{0.8}
\begin{tabular}{rrrrrrrrrr}
Contour & \multicolumn{9}{c}{$100(1-\alpha)\%$} \\
& $10$ & $20$ & $30$ & $40$ & $50$ & $60$ & $70$ & $80$ & $90$\\
\midrule
$\pi_{s^n}^{\infty,1}$ & 1.04 & 2.11 & 3.21 & 4.37 & 5.63 & 7.02 & 8.65 & 10.7 & 13.7 \\
$\pi_{s^n}^{\infty,2}$ & 2.03 & 4.11 & 6.25 & 8.51 & 10.9 & 13.7 & 16.8 & 20.8 & 26.7 \\
$\pi_{s^n}^{\infty,3}$ & 5.54 & 11.2 & 17.0 & 23.2 & 29.8 & 37.2 & 45.9 & 56.7 & 72.8 \\
$\pi_{s^n}^{\infty,4}$ & 1.77 & 3.58 & 5.45 & 7.41 & 9.54 & 11.9 & 14.7 & 18.1 & 23.3 \\
\midrule
$\pi_{s^n}^{\forall,1}$ & 2.06 & 4.21 & 6.52 & 8.99 & 11.6 & 14.5 & 18.0 & 22.6 & 29.7 \\
$\pi_{s^n}^{\forall,2}$ & 4.57 & 9.29 & 14.1 & 19.1 & 24.5 & 30.5 & 37.7 & 46.6 & 60.5 \\
$\pi_{s^n}^{\forall,3}$ & 10.2 & 20.8 & 31.6 & 42.7 & 54.0 & 66.4 & 80.5 & 98.4 & 125 \\
$\pi_{s^n}^{\forall,4}$ & 2.55 & 5.13 & 7.79 & 10.5 & 13.5 & 16.9 & 20.7 & 25.5 & 33.1 \\
\end{tabular}
\end{subtable}
\end{table}

\subsection{PM 2.5 data analysis} \label{ss:pm25}

PM\textsubscript{2.5} refers to particles with a diameter of 2.5 micrometers or less that, due to their small size, can be absorbed into the bloodstream and cause serious health problems. Wildfires are a significant source of PM\textsubscript{2.5} and their prevalence is expected to continue increasing with climate change \citep{chen2021mortality}. Exposure to high concentrations of PM\textsubscript{2.5} from wildfire smoke was found to have an association with birthweight \citep{birtill2024effects} and emergency department admissions in Australia \citep{ranse2022impact}. The Australian bushfire season from August to December of 2023 made international headlines \citep{ABCnews} and burned approximately $84$ million hectares \citep{PreventionWeb}, including several fires in Queensland. Due to the dangerous consequences of exposure to high levels of PM\textsubscript{2.5} on health, we model the distribution of daily medians of PM\textsubscript{2.5} in the city of Maryborough in Queensland, Australia, as a function of season to better understand windows of exposure for its residents. The data consists of daily medians of the hourly average of PM\textsubscript{2.5} (in micrograms per cubic metre) measured at their Maryborough site from January 1 to December 31, 2023. The data are publicly available under the Creative Commons Attribution 4.0 license and available for download on the Queensland government open data portal \citep{PM25-data}.

A histogram and scatter plot of the $n=365$ daily medians are plotted in Figure \ref{f:PM25-hist}. Let $(y_j)_{j=1}^{365}$ be the daily medians of PM\textsubscript{2.5} over the course of the year, with $y_j$ assumed to follow a $g$-and-$k$ distribution with location $\mu$, scale $\sigma_j$, skew $g$ and kurtosis $k$ ($c=0.8$). To fit the time trend, we model the scale parameter using a linear expansion of B-splines of degree five with knots at $365/3, 365/2$ and $2\times 365/3$:
\begin{align*}
\log \sigma_j &= \sum_{r=-5}^2 \beta_r \psi_{r,5}(j), \quad j=1, \ldots, 365,
\end{align*}
with $\psi_{r,5}(j)$ the B-splines and $\beta_{-5},\ldots, \beta_2$ the unknown B-spline coefficients, with $\psi_{0,5}(j)=1$ and $\beta_0$ acting as an intercept. A plot of the B-splines is provided in Appendix~\ref{a:data}. 

\begin{figure}[t]
\centering
\includegraphics[width=\textwidth]{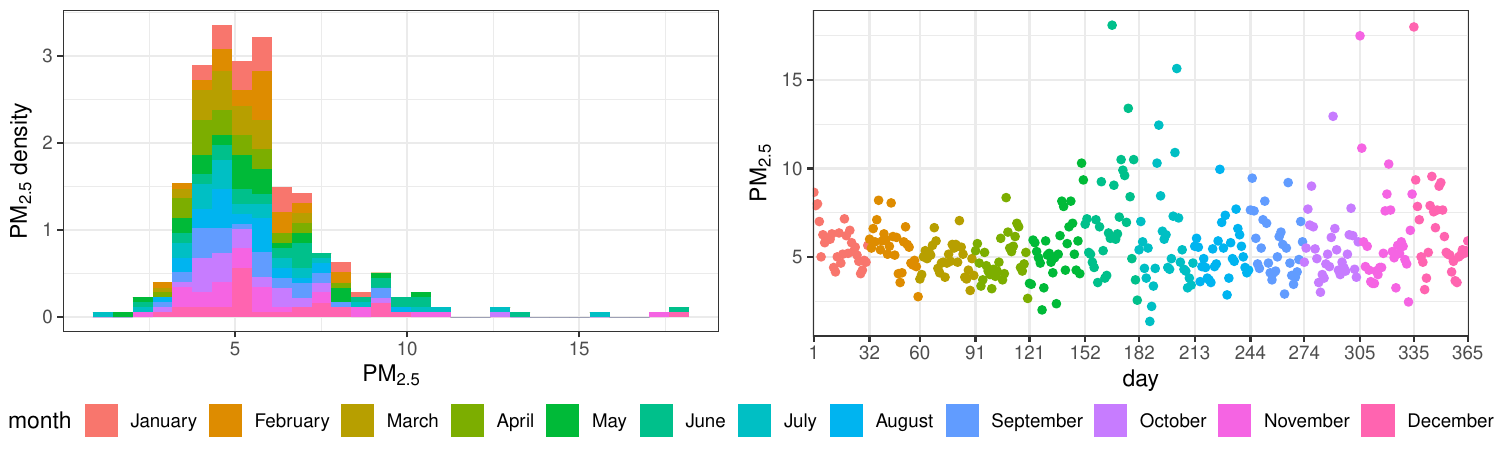}
\caption{Histogram and scatter plot of PM\textsubscript{2.5} daily median in the city of Maryborough.} \label{f:PM25-hist}
\end{figure}

The first uncertain parameter of interest is $\theta^\star=(\mu,\beta_{-5},\ldots,\beta_2,g,k) \in \RR^{11}$. We randomly shuffle the data $z^n = \{ \psi_{-5,5}(j), \ldots, \psi_{2,5}(j), y_j \}_{j=1}^{365}$ and partition the shuffled data into $B=4$ blocks $z^{(b)}$ of sizes $n_1=n_2=n_3=91$, $n_4=92$. For each block, we train an emulator to learn the map between data $z^{(b)}$ of size $n_b$ and values of $\theta^\star$ using a training distribution of $\theta^\star$ that is uniform on the interval $[-20,20]$ for $\mu$, $[-2,2]$ for $\beta_{-5},\ldots,\beta_2$, $[-5,5]$ for $g$ and $[-1/2,5]$ for $k$ (the training distribution gives a range of $\sigma_j$ values from $0.0183$ to $54.6$). The emulator is as described in Section \ref{ss:alpha-stable-sims} but with a $25$-dimensional summary statistic. Once the emulator is trained, we compute $\hat{\theta}_{z^{(b)}}^\star$ and $J_{z^{(b)}}^\star$ as the mean and inverse variance, respectively, of $1{,}000$ draws from the emulator, and $\check{\theta}_{s_n}^\star$ as in equation \eqref{e:meta-def}. From here, we can also compute a large-$n$ estimate of the second and primary uncertain parameter of interest, $\theta=(\mu,\sigma_1, \ldots, \sigma_{365},g,k)$, using 
\begin{align*}
\check{\theta}_{s^n} &= \Bigl( \check{\theta}_{s^n,1}, \bigl[ \exp \{ \psi_{-5,5}(j)\check{\theta}_{s^n,3} + \ldots \psi_{2,5}(j) \check{\theta}_{s^n,9} \} \bigr]_{j=1, \ldots, 365}, \check{\theta}_{s^n,10},\check{\theta}_{s^n,11} \Bigr) \in \RR^{368}.
\end{align*}
An estimate of its asymptotic inverse variance is $J_{s^n} = d^\top J_{s^n}^\star d$ with $J_{s^n}^\star=\sum_{b=1}^B J_{z^{(b)}}^\star$ and
\begin{align*}
d_{rj} &= \{0, \check{\theta}_{s^n,1+j} \psi_{r,5}(j), 0, 0\} \in \mathbb{R}^{11}, \quad j=1, \ldots, 365, \quad r=-5, \ldots, 2,\\
d &= \bigl\{1_1, ( d_{-5j}, \ldots, d_{2j} )_{j=1}^{365}, 1_{10}, 1_{11} \bigr\} \in \mathbb{R}^{11\times 368},
\end{align*}
and $1_q \in \mathbb{R}^{11}$ is the $q^\text{th}$ standard basis vector. We compute $\pi_{s_n}^{\forall,q}$ using equation \eqref{eq:contour.mc-q} with $\theta^\dagger=\check{\theta}_{s_n}$ and $M=5{,}000$ Monte Carlo samples for $q\in \{1,\ldots,368\}$. 

One of the unique features of the proposed valid divide-and-conquer IM framework is that finite-sample valid inference can be carried out on the $365$ daily scale parameters. Figure \ref{f:PM25-scale-fits} plots the large-$n$ estimates of daily scales, $\check{\theta}_{s^n,2}, \ldots, \check{\theta}_{s^n,366}$, with $90\%$ marginal confidence intervals constructed using $\{\theta_q \in \TT_q: \pi_{s_n}^{\forall,q}(\theta_q)> 0.1 \}$. The point estimates appear to mimic the pattern of daily median PM\textsubscript{2.5} observations in the scatterplot of Figure \ref{f:PM25-hist}. A plot of the observed versus fitted quantiles (obtained via probability integral transform using the fitted distribution function of the $g$-and-$k$ distribution) in Figure \ref{f:PM25-PITs} suggests that our model fits well. Figure \ref{f:PM-other-IMs} plots the large-$n$ and valid divide-and-conquer possibility contours for the location, skew and kurtosis parameters. The $90\%$ confidence intervals for the location, skew and kurtosis are $(5.198, 5.218)$, $(0.6534, 0.6811)$ and $(0.1762, 0.1955)$, indicating a positive skew and heavy right tail consistent with the histogram in Figure \ref{f:PM25-hist}. 

\begin{figure}[t]
\centering
\begin{subfigure}[t]{0.68\textwidth}
\includegraphics[width=\textwidth]{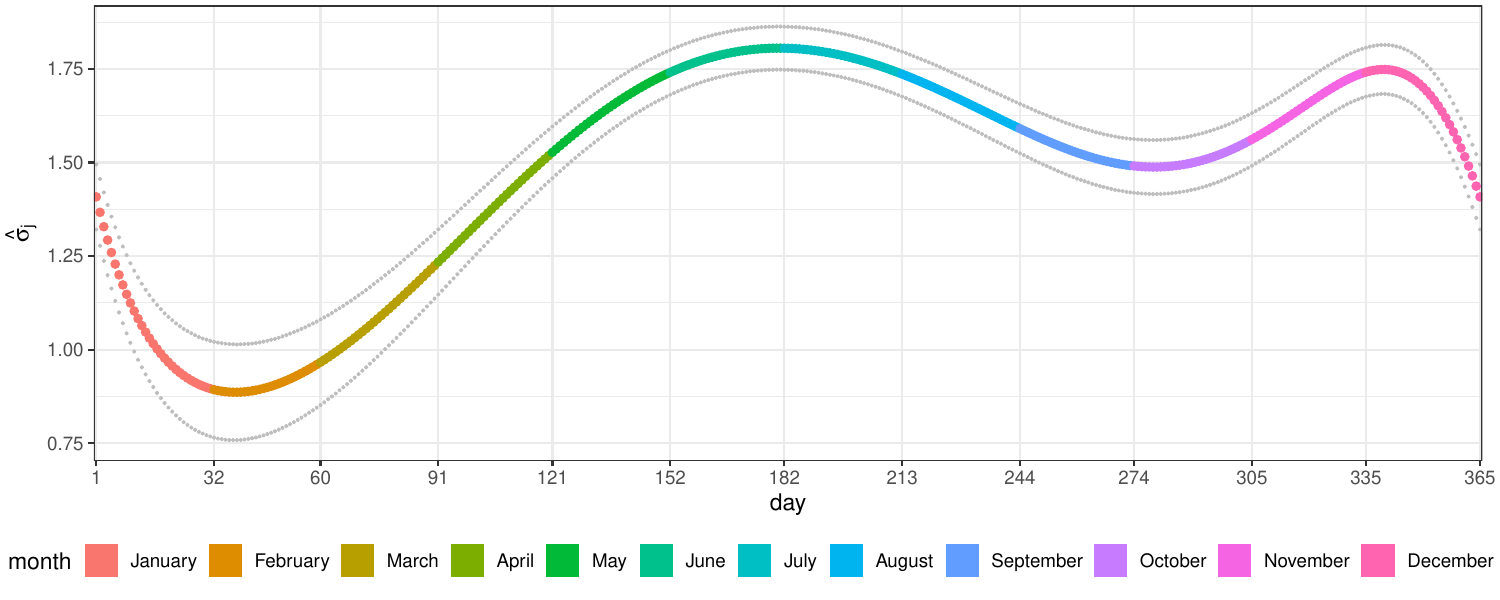}
\caption{} \label{f:PM25-scale-fits}
\end{subfigure}
\begin{subfigure}[t]{0.28\textwidth}
\includegraphics[width=\textwidth]{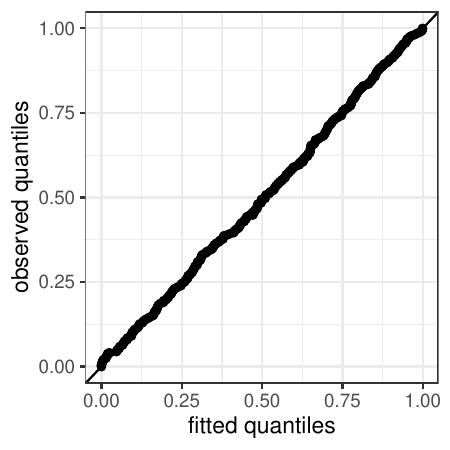}
\caption{} \label{f:PM25-PITs}
\end{subfigure}
\caption{(a) Scatter plot of the large-$n$ estimate of daily fitted scale parameters with marginal $90\%$ confidence interval in grey. (b) Observed versus fitted quantiles for the PM\textsubscript{2.5} analysis.}
\end{figure}

\begin{figure}[t]
\centering
\includegraphics[width=0.75\textwidth]{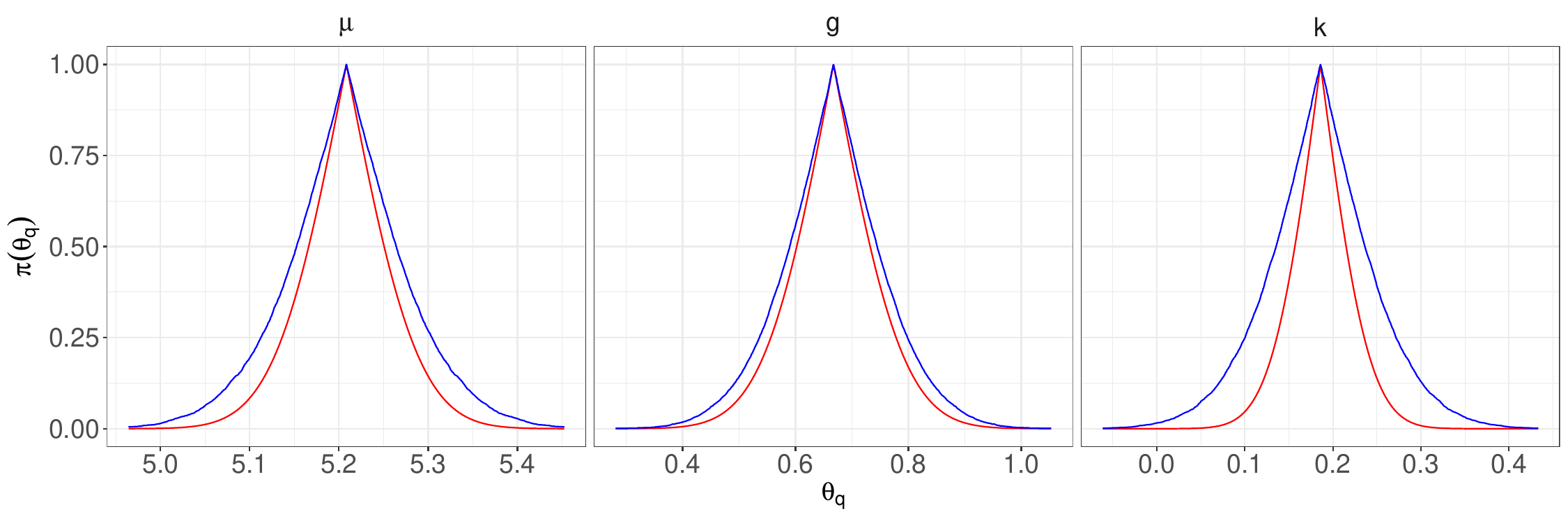}
\caption{Large-$n$ (red) and valid divide-and-conquer (blue) marginal possibility contours of $\mu$, $g$ and $k$ for the PM\textsubscript{2.5} analysis.} \label{f:PM-other-IMs}
\end{figure}

Figure \ref{f:PM25-scale-IMs} plots the large-$n$ and valid divide-and-conquer possibility contours for the daily scale parameters $\sigma_1, \ldots, \sigma_{365}$. As suggested in Figure \ref{f:PM25-scale-fits}, there appears to be greater variability in the scales across days in December and January (summer), and comparatively less in June and July (winter). As in Section \ref{ss:g-and-k-sims}, the large-$n$ possibility contour is improperly calibrated for inference. In contrast, the valid divide-and-conquer contour is wider because it appropriately accounts for the amount of information in the observed sample. The uncertainty in the model is largest in January, February and March, as evidenced by the wider $90\%$ confidence intervals in Figure \ref{f:PM25-scale-fits}. The largest values of the scales correspond to winter months (June and July) and early summer (November and December), meaning that Maryborough residents are at highest risk of exposure to PM\textsubscript{2.5} during these months. 

\begin{figure}[t]
\centering
\includegraphics[width=0.95\textwidth]{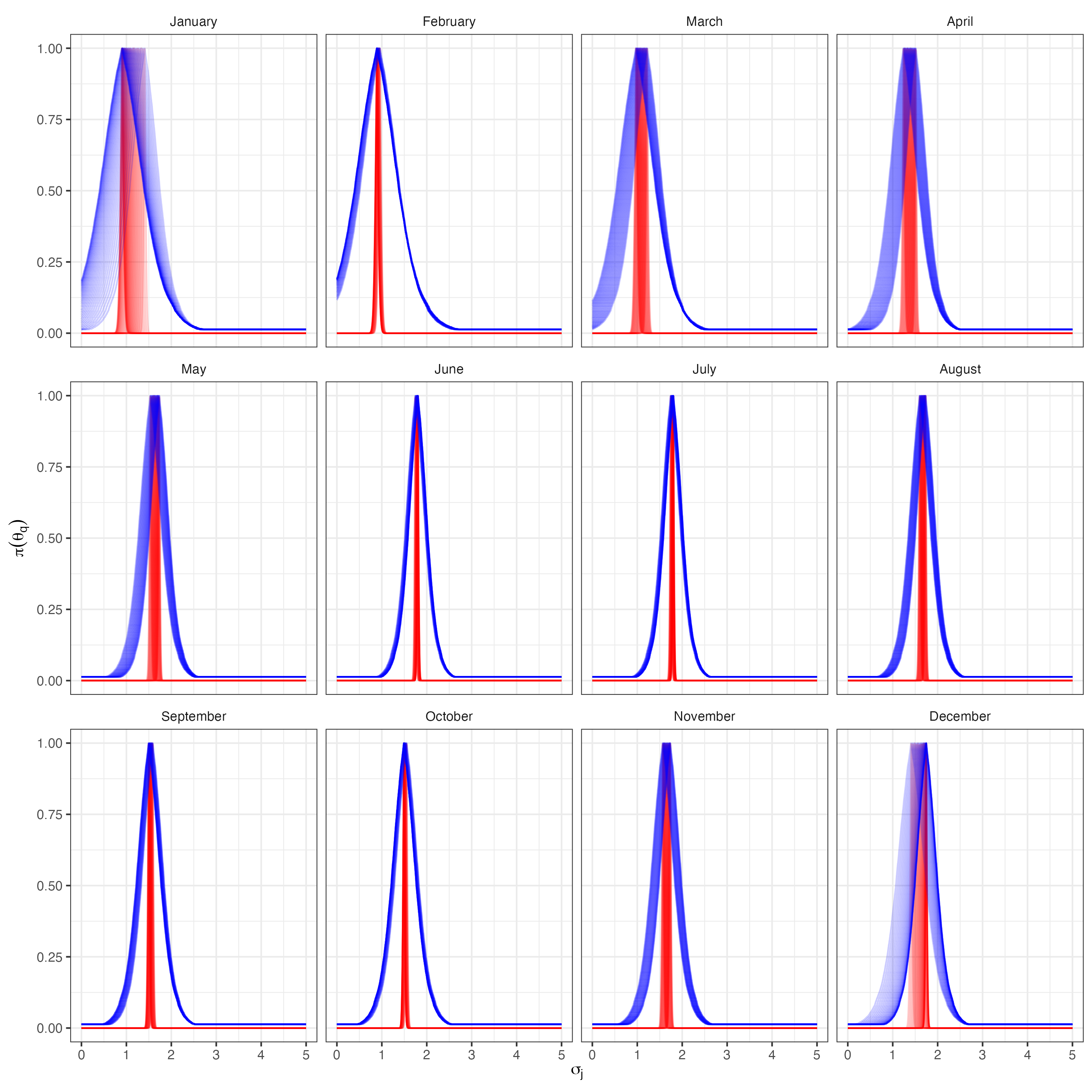}
\caption{Large-$n$ (red) and valid divide-and-conquer (blue) marginal possibility contours of the daily scale parameter in each month for the PM\textsubscript{2.5} analysis.} \label{f:PM25-scale-IMs}
\end{figure}

\section{Conclusion}

The large-$n$ divide-and-conquer possibility contour presented in Section~\ref{s:asymptotic-rule} leverages (approximate) Gaussianity in both the ranking and validification steps of the IM construction. In Theorem~\ref{thm:naive-efficiency}, we show that this construction is asymptotically valid and efficient (i.e., agrees with the optimal full-data, likelihood-based IM), but this is insufficient for our purposes since we are motivated by settings with moderate $n$. The valid divide-and-conquer IM employs the Gaussian relative likelihood in the ranking step, with the key distinction that the validification step is carried out using the true sampling distribution of the summaries. As the name suggests, this ensures the validity of the valid divide-and-conquer IM even in finite-sample settings. We demonstrate in Theorem~\ref{thm:middle_efficiency} that the valid divide-and-conquer IM is also asymptotically efficient, meaning that we gain important reliability guarantees at no asymptotic loss relative to both the full data and large-$n$ divide-and-conquer IMs. 

Our focus was on the construction of a valid and efficient divide-and-conquer IM, described by its possibility contour, which we then used in ways that have a subtle and perhaps unexpected Bayesian flavor.  One proposed use of the IM contour is to \emph{visually} show what the data have to say about (the relevant features of) the uncertain $\Theta$, e.g., in Figure~\ref{f:alpha-stable-contour}, offering a frequentists' visual counterpart to the Bayesians' posterior density.  Another proposed use of the IM contour is procedural, for reading off confidence sets as in Equation~\eqref{eq:region}, akin to how Bayesians read off highest posterior density credible sets, the key difference being that a valid IM's level sets are automatically calibrated to be confidence sets.  Beyond these Bayesian--frequentist connections, we also mentioned several times that the IM output is not just a tool for reading off confidence sets but, rather, can be used for fully conditional, data-dependent, probability-like uncertainty quantification about $\Theta$, comparable to a Bayesian posterior distribution.  Indeed, one can do formal inference by computing upper/lower probabilities associated with relevant hypotheses about $\Theta$, and, more generally, upper/lower expectations of functions of $\Theta$ for, say, formal decision-theoretic evaluation of relevant actions.  Importantly, validity of the IM implies that all this Bayesian-like uncertainty quantification, not just confidence sets, is reliable or calibrated in a frequentist sense.

While our proposal assumes the analyst seeks the most statistically efficient inference and hence wants to combine maximum likelihood estimators, the Gaussian relative likelihood $R^\text{gauss}$ in equation \eqref{e:gaussian-relative-likelihood} acts as a convenient combination rule that can also serve to combine other estimators. For example, one could replace the block-specific maximum likelihood estimator $\hat{\theta}_{z^{(b)}}$ with, say, a method of moments estimator, and the observed Fisher information $J_{s^n}$ with any symmetric positive-definite matrix in the formulation of $R^\text{gauss}$. This could be particularly useful in settings where a likelihood does not exist and training the emulator is computationally expensive. One drawback is that the resulting valid divide-and-conquer IM loses its asymptotic efficiency (Theorem \ref{thm:middle_efficiency}). Although well motivated by the connection to the large-$n$ divide-and-conquer estimator, another interesting direction for future work is to explore alternatives to the Gaussian relative likelihood for combining block-specific estimators.

One limitation, which has perhaps not received sufficient attention in our paper, is the difficulty of computing joint possibility contours when the dimension of $\Theta$ is large. This is most notable in the analysis of median PM\textsubscript{2.5} in Section \ref{ss:pm25}, for which we compute marginal daily confidence regions for the scale parameters $\sigma_j$ instead of a joint confidence set for all scale parameters $\sigma_1, \ldots, \sigma_{365}$. The primary challenge remains computation due to the dimension of the grid of $\theta$ values at which to evaluate the possibility contour $\pi_{s^n}^\forall$ using equation \eqref{eq:contour.mc}. Recent work on probabilistic approximations to possibility contours \citep{martin2025new} may offer some new directions in high dimensions. 

\section*{Acknowledgments}

This work is partially supported by the U.~S.~National Science Foundation, grants DMS--2337943 (ECH), SES–2051225 (RM), and DMS–2412628 (RM).

\appendix

\section{Details from Section~\ref{ss:thule}}
\label{p:gamma-IM-proof}

Recall that, in Example~\ref{ss:expo}, the primary goal was to evaluate the block-specific possibilistic IM contours, i.e., $\pi_{z^{(b)}}(\theta)$ in the Exponential model; the secondary goal is to evaluate the full data possibilistic IM contour, but this turns out to be analogous to a block-specific calculation, as discussed below.  

Proposition \ref{p:gamma-IM} below gives the more-or-less closed-form expression of the block-specific possibility contour; we say ``more-or-less'' because it involves the Gamma and Lambert $W$ functions \citep{lambert1771observations,corless1996lambert}, which can be readily evaluated, e.g., using the {\tt gammainc} and {\tt lambertWp} functions in the {\tt R} packages {\tt expint} and {\tt lamW} respectively, but technically have no closed-form expression.  The proof highlights the complicated nature of performing the validification in closed-form. Among other difficulties, the mapping from observations $z^{(b)}$ to the likelihood ratio $R (z^{(b)}, \theta)$ is surjective, as in Example \ref{ss:gaussian}, which requires careful treatment in the derivation of the contour.

\begin{prop}\label{p:gamma-IM}
The IM possibility contour for the $b^\text{th}$ block is
\begin{align*}
\pi_{z^{(b)}}(\theta) &= \frac{1}{n_b!} \int_0^{t_b} \{ W_{0}(-n_b^{-1}t^{1/n_b}) + 1 \}^{-1} - \{ W_{-1}(-n_b^{-1} t^{1/n_b}) + 1 \}^{-1} ~dt\\
&=1 + \frac{\Gamma\{ n_b, -n_b W_{-1}(-n_b^{-1} t_b^{1/n_b}) \}}{(n_b-1)!} -\frac{\Gamma\{ n_b, -n_b W_{0}(-n_b^{-1} t_b^{1/n_b}) \}}{(n_b-1)!},
\end{align*}
with $t_b=(\theta n_b \hat{\theta}_{z^{(b)}}^{-1})^{n_b} e^{-\theta n_b \hat{\theta}_{z^{(b)}}^{-1}}$ and $W_{-1}, W_0$ two branches of the Lambert W function. 
\end{prop}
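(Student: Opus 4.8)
The plan is to reduce the validification probability to a tail calculation for a Gamma pivot, and then to resolve the two roots of a unimodal transformation using the two real branches of the Lambert $W$ function. First I would exhibit the pivotal structure. Under $\prob_\theta$ the block sum $\sum_i Z_i$ is $\gam(n_b,\theta)$, so $T := \theta\, n_b\, \hat\theta_{Z^{(b)}}^{-1} = \theta \sum_i Z_i \sim \gam(n_b,1)$, a distribution free of $\theta$. Substituting $\theta\hat\theta_{Z^{(b)}}^{-1} = T/n_b$ into the block relative likelihood of Example~\ref{ss:expo} gives $R(Z^{(b)},\theta) = n_b^{-n_b} e^{n_b}\, T^{n_b} e^{-T}$ and, analogously, $R(z^{(b)},\theta) = n_b^{-n_b} e^{n_b}\, t_b$ with $t_b$ as in the statement. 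The defining event $\{R(Z^{(b)},\theta) \le R(z^{(b)},\theta)\}$ therefore collapses to $\{h(T) \le t_b\}$, where $h(t) = t^{n_b} e^{-t}$, so that $\pi_{z^{(b)}}(\theta) = \prob_\theta\{h(T) \le t_b\}$.

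Next I would analyze $h$. It is unimodal on $(0,\infty)$, increasing then decreasing with maximum at $t = n_b$, so for the generic case $t_b < n_b^{n_b}e^{-n_b}$ the sublevel set is a union of two tails, $\{T \le \tau_-\} \cup \{T \ge \tau_+\}$, where $\tau_- < n_b < \tau_+$ are the two roots of $h(t) = t_b$. To solve $h(t)=t_b$ I would take $n_b$-th roots and substitute $s = t/n_b$, reducing it to $s e^{-s} = n_b^{-1} t_b^{1/n_b}$; the two real solutions, lying in $(0,1)$ and $(1,\infty)$, are furnished by the principal and lower branches of Lambert $W$, giving $\tau_- = -n_b W_0(-n_b^{-1} t_b^{1/n_b})$ and $\tau_+ = -n_b W_{-1}(-n_b^{-1} t_b^{1/n_b})$. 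The closed-form (second) expression then follows immediately by writing $\pi_{z^{(b)}}(\theta) = 1 - \{F(\tau_+) - F(\tau_-)\}$, with $F$ the $\gam(n_b,1)$ distribution function, and recording the survival function of $\gam(n_b,1)$ as $\Gamma(n_b,\cdot)/(n_b-1)!$ via the upper incomplete Gamma function.

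For the integral (first) expression I would instead change variables to the likelihood-ratio scale $U = h(T)$ and integrate its density over $(0,t_b)$. Because $h$ is two-to-one---the surjectivity flagged in the surrounding text---the density of $U$ is a sum of two branch contributions, $f_U(u) = f(\tau_-(u))\,|\tau_-'(u)| + f(\tau_+(u))\,|\tau_+'(u)|$, with $f$ the $\gam(n_b,1)$ density. The simplification is clean: at any root $t$ of $h(t)=u$ one has $t^{n_b-1}e^{-t} = u/t$, so $f(t) = u/\{(n_b-1)!\,t\}$ and $|\tau'(u)| = 1/|h'(t)| = t/\{u\,|n_b-t|\}$, whence each contribution reduces to $1/\{(n_b-1)!\,|n_b-t|\}$. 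Using $n_b - \tau_- = n_b\{1 + W_0(-n_b^{-1}u^{1/n_b})\}$ and the analogous identity with $W_{-1}$ produces exactly the two reciprocal-Lambert terms in the integrand, and integrating over $u \in (0,t_b)$ recovers the claim. The full-data contour is handled identically, with $n_b$ and $\hat\theta_{z^{(b)}}$ replaced by $n$ and $\hat\theta_{z^n}$.

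The main obstacle is precisely the non-injectivity of $t \mapsto t^{n_b}e^{-t}$: one must pair each root with the correct Lambert branch, keep careful track of the signs of $h'$ and of the derivatives $\tau_\mp'(u)$ in the change of variables, and verify the branch-domain condition that $-n_b^{-1}t_b^{1/n_b} \in (-e^{-1},0)$ so that both $W_0$ and $W_{-1}$ are real and distinct. Once this bookkeeping is in place, both displayed forms follow from routine manipulations of the incomplete Gamma function, and the boundary case $t_b = n_b^{n_b}e^{-n_b}$, where the two roots coincide at $n_b$, is dispatched by continuity.
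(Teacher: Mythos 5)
Your proposal is correct, and it reaches both displayed expressions by a route that is partly, but meaningfully, different from the paper's. You share the same skeleton: the pivot $T=\theta n_b\hat\theta_{Z^{(b)}}^{-1}\sim\gam(n_b,1)$, the reduction of the validification event to $\{h(T)\le t_b\}$ with $h(t)=t^{n_b}e^{-t}$, and the resolution of the two-to-one map by the $W_0$ and $W_{-1}$ branches. Where you diverge is in how the contour is extracted. The paper works entirely through the density of $T_b=h(X_b)$: it performs the two-branch change of variables, simplifies at length using the identity $W_r(a)=ae^{-W_r(a)}$ to get the reciprocal-Lambert density, and then integrates that density back (via three further substitutions, $u=-n_b^{-1}t^{1/n_b}$, $v=W_r(u)$, $w=-vn_b$) to land on the incomplete-Gamma form. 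You instead obtain the closed-form expression directly from the observation that the sublevel set is a union of two Gamma tails, $\{T\le\tau_-\}\cup\{T\ge\tau_+\}$ with $\tau_-=-n_bW_0(-n_b^{-1}t_b^{1/n_b})$ and $\tau_+=-n_bW_{-1}(-n_b^{-1}t_b^{1/n_b})$, so that $\pi_{z^{(b)}}(\theta)=F(\tau_-)+1-F(\tau_+)$ reads off immediately from the $\gam(n_b,1)$ distribution and survival functions—no density computation or re-integration needed. Your derivation of the integral expression is also tidier than the paper's: the identity $f(t)\,|\tau'(u)|=1/\{(n_b-1)!\,|n_b-t|\}$ at any root $t$ of $h(t)=u$ collapses in two lines what the paper achieves through a page of bracket expansions. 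The paper's density-first route has the side benefit of exhibiting $f_{T_b}$ explicitly as an intermediate object, but for establishing the proposition your two-tail argument is the more economical and transparent path, and your bookkeeping (branch-domain condition $-n_b^{-1}t_b^{1/n_b}\in(-e^{-1},0)$, signs of $n_b-\tau_\mp$, the boundary case $\tau_-=\tau_+=n_b$) covers the points where such an argument could go wrong.
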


\begin{proof}
Define $X_b=\theta n_b \hat{\theta}_{Z^{(b)}}^{-1} \sim \gam(n_b,1)$ and $T_b= X_b^{n_b} e^{-X_b}$, and $t_b$ the observed value of $T_b$. The IM possibility contour for the $b$\textsuperscript{th} block is
\begin{align*}
\pi_{z^{(b)}}(\theta) &= \prob_{\theta} [( \theta \hat{\theta}_{Z^{(b)}}^{-1} )^{n_b} e^{n_b-\theta n_b \hat{\theta}_{Z^{(b)}}^{-1}} \leq ( \theta \hat{\theta}_{z^{(b)}}^{-1} )^{n_b} e^{n_b-\theta n_b \hat{\theta}_{z^{(b)}}^{-1}} ] =\prob_{\theta} (T_b \leq t_b).
\end{align*}
To evaluate the possibility contour, we find the density of $T_b= X_b^{n_b} e^{-X_b}$ using a change of variables. Let $t_b=x_b^{n_b} e^{-x_b}$, and the inverse transformation exists and corresponds to $x_b=-n_b W_{r_b}(-n_b^{-1} t_b^{1/n_b})$, with $W_{r_b}$ the Lambert W function, $r_b=-1$ if $n_b^{-1} x_b \geq 1$, i.e. if $\hat{\theta}_{z^{(b)}}^{-1} \geq \theta^{-1}$, otherwise $r_b=0$. The partial derivative is
\begin{align*}
\frac{\partial x_b}{\partial t_b} &=-\frac{W_{r_b}(-n_b^{-1} t_b^{1/n_b})}{t_b W_{r_b}(-n_b^{-1} t_b^{1/n_b}) + t_b} = \{n_b t_b^{1-1/n_b} e^{W_{r_b}(-n_b^{-1}t_b^{1/n_b})} -t_b\}^{-1}.
\end{align*}
Using separate change of variables on the two parts $\{x_b: n_b^{-1} x_b \geq 1\}$ and $\{ x_b: 0 \leq n_b^{-1} x_b <1\}$ of the domain of $X_b$, the distribution of $T_b$ is
\begin{align}
f_{T_b}(t_b) 
&= f_{X_b}(x_b) \Bigl| \frac{\partial x_b}{\partial t_b} \Bigr| \mathbbm{1}(n_b^{-1} x_b \geq 1) + f_{X_b}(x_b) \Bigl| \frac{\partial x_b}{\partial t_b} \Bigr| \mathbbm{1}(0 \leq n_b^{-1} x_b < 1), \nonumber \\
&=\frac{1}{\Gamma(n_b)} x_b^{n_b-1} e^{-x_b} \Bigl| \frac{\partial x_b}{\partial t_b} \Bigr| \mathbbm{1}(n_b^{-1} x_b \geq 1) + \frac{1}{\Gamma(n_b)} x_b^{n_b-1} e^{-x_b} \Bigl| \frac{\partial x_b}{\partial t_b} \Bigr| \mathbbm{1}(0 \leq n_b^{-1} x_b < 1) \nonumber \\
&= \frac{1}{(n_b-1)!} \Bigl\{ -n_b W_{-1}(-n_b^{-1}t_b^{1/n_b}) \Bigr\}^{n_b-1} e^{n_b W_{-1}(-n_b^{-1} t_b^{1/n_b})} \frac{W_{-1}(-n_b^{-1} t_b^{1/n_b})}{t_b W_{-1}(-n_b^{-1} t_b^{1/n_b}) + t_b} \nonumber \\
&\quad - \frac{1}{(n_b-1)!} \Bigl\{ -n_b W_{0}(-n_b^{-1} t_b^{1/n_b}) \Bigr\}^{n_b-1} e^{n_b W_{0}(-n_b^{-1} t_b^{1/n_b})} \frac{W_{0}(-n_b^{-1} t_b^{1/n_b})}{t_b W_{0}(-n_b^{-1} t_b^{1/n_b}) + t_b}, \label{eq:prop1-1}
\end{align}
with $0 \leq t_b \leq n_b^{n_b} e^{-n_b}$. Now, we use the identity $W_{r_b}(a)=ae^{-W_{r_b}(a)}$, $r_b \in \{-1,0\}$, of the Lambert W function to simplify
\begin{align*}
\frac{W_{r_b}(-n_b^{-1} t_b^{1/n_b})}{t_b W_{r_b}(-n_b^{-1} t_b^{1/n_b}) + t_b} &= \frac{-n_b^{-1}t_b^{1/n_b}e^{-W_{r_b}(-n_b^{-1}t_b^{1/n_b})}}{- t_b n_b^{-1} t_b^{1/n_b}e^{-W_{r_b}(-n_b^{-1}t_b^{1/n_b})} + t_b}.
\end{align*}
Multiplying the numerator and denominator by $-n_bt_b^{-1/n_b}e^{W_{r_b}(-n_b^{-1}t_b^{1/n_b})}$ gives
\begin{align*}
\frac{W_{r_b}(-n_b^{-1} t_b^{1/n_b})}{t_b W_{r_b}(-n_b^{-1} t_b^{1/n_b}) + t_b} 
&=\frac{1}{t_b - n_b t_b^{1-1/n_b} e^{W_{r_b}(-n_b^{-1}t_b^{1/n_b})}}.
\end{align*}
Plugging the above into equation \eqref{eq:prop1-1}, $f_{T_b}(t_b)$ simplifies to
\begin{align*}
f_{T_b}(t_b) &=\frac{1}{(n_b-1)!} \Bigl\{ -n_b W_{0}(-n_b^{-1} t_b^{1/n_b}) \Bigr\}^{n_b-1} \frac{e^{n_b W_{0}(-n_b^{-1} t_b^{1/n_b})} }{ n_b t_b^{1-1/n_b} e^{W_{0}(-n_b^{-1} t_b^{1/n_b})} -t_b }\\
&\quad -\frac{1}{(n_b-1)!} \Bigl\{ -n_b W_{-1}(-n_b^{-1}t_b^{1/n_b}) \Bigr\}^{n_b-1} \frac{e^{n_b W_{-1}(-n_b^{-1} t_b^{1/n_b})} }{n_b t_b^{1-1/n_b} e^{W_{-1}(-n_b^{-1} t_b^{1/n_b})} - t_b},
\end{align*}
with $0 \leq t_b \leq n_b^{n_b} e^{-n_b}$. Again using the identity $W_{r_b}(a)=ae^{-W_{r_b}(a)}$ to simplify the bracketed terms, we can write
\begin{align*}
f_{T_b}(t_b) &=\frac{1}{(n_b-1)!} \Bigl\{ t_b^{1/n_b} e^{-W_{0}(-n_b^{-1} t_b^{1/n_b})} \Bigr\}^{n_b-1} \frac{e^{n_b W_{0}(-n_b^{-1} t_b^{1/n_b})} }{ n_b t_b^{1-1/n_b} e^{W_{0}(-n_b^{-1} t_b^{1/n_b})} - t_b } \\
&\quad -\frac{1}{(n_b-1)!} \Bigl\{ t_b^{1/n_b} e^{-W_{-1}(-n_b^{-1} t_b^{1/n_b})} \Bigr\}^{n_b-1} \frac{e^{n_b W_{-1}(-n_b^{-1} t_b^{1/n_b})} }{n_b t_b^{1-1/n_b} e^{W_{-1}(-n_b^{-1} t_b^{1/n_b})} -t_b} \\
&=\frac{1}{(n_b-1)!}t_b^{1-1/n_b} e^{W_{0}(-n_b^{-1} t_b^{1/n_b}) - n_b W_{0}(-n_b^{-1} t_b^{1/n_b})} \frac{e^{n_b W_{0}(-n_b^{-1} t_b^{1/n_b})} }{ n_b t_b^{1-1/n_b} e^{W_{0}(-n_b^{-1} t_b^{1/n_b})} - t_b } \\
&\quad -\frac{1}{(n_b-1)!} t_b^{1-1/n_b} e^{W_{-1}(-n_b^{-1} t_b^{1/n_b})-n_b W_{-1}(-n_b^{-1} t_b^{1/n_b})} \frac{e^{n_b W_{-1}(-n_b^{-1} t_b^{1/n_b})} }{n_b t_b^{1-1/n_b} e^{W_{-1}(-n_b^{-1} t_b^{1/n_b})} - t_b},
\end{align*}
$0 \leq t_b \leq n_b^{n_b} e^{-n_b}$, where the second equality is obtained by expanding the brackets in the first. Carrying out routine simplifications, we get, for $0 \leq t_b \leq n_b^{n_b} e^{-n_b}$,
\begin{align*}
f_{T_b}(t_b) &=\frac{1}{(n_b-1)!} e^{W_{0}(-n_b^{-1} t_b^{1/n_b}) - n_b W_{0}(-n_b^{-1} t_b^{1/n_b})} \frac{e^{n_b W_{0}(-n_b^{-1} t_b^{1/n_b})} }{ n_b e^{W_{0}(-n_b^{-1} t_b^{1/n_b})} -t_b^{1/n_b} } \\
&\quad -\frac{1}{(n_b-1)!} e^{W_{-1}(-n_b^{-1} t_b^{1/n_b})-n_b W_{-1}(-n_b^{-1} t_b^{1/n_b})} \frac{e^{n_b W_{-1}(-n_b^{-1} t_b^{1/n_b})} }{n_b e^{W_{-1}(-n_b^{-1} t_b^{1/n_b})} -t_b^{1/n_b}} \\
&=\frac{1}{(n_b-1)!} \frac{e^{W_{0}(-n_b^{-1} t_b^{1/n_b})} }{ n_b e^{W_{0}(-n_b^{-1} t_b^{1/n_b})} -t_b^{1/n_b} } -\frac{1}{(n_b-1)!} \frac{e^{W_{-1}(-n_b^{-1} t_b^{1/n_b})} }{n_b e^{W_{-1}(-n_b^{-1} t_b^{1/n_b})} -t_b^{1/n_b}} \\
&=\frac{1}{(n_b-1)!}\{ n_b -t_b^{1/n_b} e^{-W_{0}(-n_b^{-1} t_b^{1/n_b})} \}^{-1} -\frac{1}{(n_b-1)!} \{ n_b -t_b^{1/n_b} e^{-W_{-1}(-n_b^{-1} t_b^{1/n_b})} \}^{-1}\\
&=\frac{1}{n_b!}\{ 1 - n_b^{-1} t_b^{1/n_b} e^{-W_{0}(-n_b^{-1} z^{1/n_b})} \}^{-1} -\frac{1}{n_b!} \{ 1 - n_b^{-1} t_b^{1/n_b} e^{-W_{-1}(-n_b^{-1} t_b^{1/n_b})} \}^{-1}.
\end{align*}
Finally, again using the identity $W_{r_b}(a)=ae^{-W_{r_b}(a)}$, the density of $T_b$ is
\begin{align*}
f_{T_b}(t_b) &=\frac{1}{n_b!}\{ W_{0}(-n_b^{-1} t_b^{1/n_b}) + 1 \}^{-1} - \frac{1}{n_b!}\{ W_{-1}(-n_b^{-1} t_b^{1/n_b}) + 1 \}^{-1}, \quad 0 \leq t_b \leq n_b^{n_b} e^{-n_b}.
\end{align*}
Our focus now turns to computing the distribution function from the density of $T_b$. Let $u=-n_b^{-1}t^{1/n_b}$ such that $t=(-n_b)^{n_b} u^{n_b}$ and $dt=n_b (-n_b)^{n_b}u^{n_b-1} du$. By a change of variables, the distribution function of $T_b$ is
\begin{align*}
G_{T_b}(t_b)&= \frac{1}{n_b!} \int_0^{t_b} \{ W_{0}(-n_b^{-1} t^{1/n_b}) + 1 \}^{-1} - \{ W_{-1}(-n_b^{-1} t^{1/n_b}) + 1 \}^{-1} ~dt \\
&=\frac{(-n_b)^{n_b}}{(n_b-1)!}\int_{-n_b^{-1} t_b^{1/n_b}}^0 u^{n_b-1} \{W_{-1}(u)+1\}^{-1} - u^{n_b-1} \{W_{0}(u)+1\}^{-1} ~du.
\end{align*}
Let $v=W_{r_b}(u)$ such that $u=ve^v$ and $du=e^v(1+v)dv$ for $r_b\in \{0,1\}$. Then by another change of variables,
\begin{align*}
G_{T_b}(t_b)
&=\frac{(-n_b)^{n_b}}{(n_b-1)!} \Biggl[ - \int_{-\infty}^{W_{-1}(-n_b^{-1} t_b^{1/n_b})} v^{n_b-1} e^{vn_b} ~dv - \int_{W_0(-n_b^{-1} t_b^{1/n_b})}^0 v^{n_b-1} e^{vn_b} ~dv\Biggr].
\end{align*}
By a third (and final) change of variables, letting $w=-vn_b$ such that $v=-w/n_b$ and $dv=-1/n_b dw$,
\begin{align*}
G_{T_b}(t_b)
&=\frac{1}{(n_b-1)!} \Biggl[ \int_0^{-n_bW_0(-n_b^{-1} t_b^{1/n_b})} w^{n_b-1} e^{-w} ~dw + \int_{-n_bW_{-1}(-n_b^{-1} t_b^{1/n_b})}^{\infty}  w^{n_b-1} e^{-w} ~dw \Biggr].
\end{align*}
Recognizing that the upper and lower incomplete Gamma functions are, respectively, the integrals $\Gamma(s,x)=\int_x^{\infty} w^{s-1} e^{-w}~dw$ and $\gamma(s,w)=\int_0^x w^{s-1}e^{-w}~dw$, the distribution function of $T_b$ is
\begin{align*}
G_{T_b}(t_b)
&= \frac{1}{(n_b-1)!} [ \gamma\{n_b, -n_bW_0(-n_b^{-1} t_b^{1/n_b})\} + \Gamma\{n_b, -n_bW_{-1}(-n_b^{-1} t_b^{1/n_b})\}].
\end{align*}
Finally, recall that
\begin{align*}
\gamma\{n_b,-n_bW_0(-n_b^{-1} t_b^{1/n_b})\} + \Gamma\{n_b,-n_bW_0(-n_b^{-1} t_b^{1/n_b})\}=\Gamma(n_b)=(n_b-1)!,
\end{align*}
and so
\begin{align*}
G_{T_b}(t_b) 
&=\frac{1}{(n_b-1)!} [(n_b-1)! - \Gamma\{n_b,-n_bW_0(-n_b^{-1} t_b^{1/n_b})\} + \Gamma\{n_b, -n_bW_{-1}(-n_b^{-1} t_b^{1/n_b})\}]\\ 
&=1 + \frac{1}{(n_b-1)!} \Gamma\{ n_b, -n_b W_{-1}(-n_b^{-1} t_b^{1/n_b}) \} -\frac{1}{(n_b-1)!}  \Gamma\{ n_b, -n_b W_{0}(-n_b^{-1}t_b^{1/n_b}) \}.
\end{align*}
Therefore, the individual IM's possibility contour is
\begin{align*}
\pi_{z^{(b)}}(\theta) &= \frac{1}{n_b!} \int_0^{t_b} \{ W_{0}(-n_b^{-1}t^{1/n_b}) + 1 \}^{-1} - \{ W_{-1}(-n_b^{-1} t^{1/n_b}) + 1 \}^{-1} ~dt\\
&=1 + \frac{1}{(n_b-1)!} \Gamma\{ n_b, -n_b W_{-1}(-n_b^{-1} t_b^{1/n_b}) \} -\frac{1}{(n_b-1)!}  \Gamma\{ n_b, -n_b W_{0}(-n_b^{-1} t_b^{1/n_b}) \},
\end{align*}
with $t_b=(\theta n_b \hat{\theta}_{z^{(b)}}^{-1})^{n_b} e^{-\theta n_b \hat{\theta}_{z^{(b)}}^{-1}}$.
\end{proof}

We can also derive the possibility contour for the full data from the summary statistics of the $B$ blocks. Using Proposition \ref{p:gamma-IM}, the full data IM possibility contour is
\begin{align*}
\pi_{z^n}(\theta) 
&=1 + \frac{\Gamma\{ n, -n W_{-1}(-n^{-1}t^{1/n}) \}}{(n-1)!} - \frac{\Gamma\{ n, -n W_{0}(-n^{-1}t^{1/n}) \}}{(n-1)!},
\end{align*}
with $t=(\theta\sum_{b=1}^B n_b \hat{\theta}_{z^{(b)}}^{-1})^n e^{-\theta\sum_{b=1}^B n_b \hat{\theta}_{z^{(b)}}^{-1}}$.

\section{Details from Section~\ref{s:asymptotic-rule}}
\label{a:large.sample}

\subsection{Regularity conditions and preliminary results}

\newcommand{\sL}{\mathcal{L}}

Certain regularity conditions are required in order to establish asymptotic concentration properties of estimators, posterior distributions, etc., and the same is true for IMs and the inner probabilistic approximations under consideration here.  Roughly, these conditions ensure that the log-likelihood function is smooth enough that it can be well-approximated by a quadratic function.  One common set of regularity conditions are the classical {\em Cram\'er conditions} \citep{cramer.book}, versions of which can be found in the standard texts, including \citet[][Theorem~3.10]{lehmann.casella.1998} and \citet[][Theorem~7.63]{schervish1995}.  Here we adopt the more modern set of sufficient conditions originating in \citet{lecam1956, lecam1960, lecam1970} and \citet{hajek1972}; see, also, \citet{bickel1998} and \citet{vaart1998}. 

The model specifies a class $\{\prob_\theta: \theta \in \TT\}$ of probability distributions, supported on $\ZZ$, indexed by $\TT \subseteq \RR^p$, with $\prob_\theta$ having a density $p_\theta(x)$ relative to a $\sigma$-finite measure $\nu$ on $\ZZ$.  Then the data $Z^n=(Z_1,\ldots,Z_n)$ is assumed to be independent and identically distributed of size $n$ with common distribution $\prob_\Theta$, where $\Theta \in \TT$ is the uncertain ``true value'' of the parameter.  Following \citet[][Ch.~2]{bickel1998}, define the (natural) logarithm and square-root density functions, respectively, as
\[ \ell_\theta(z) = \log p_\theta(z) \quad \text{and} \quad q_\theta(z) = p_\theta(z)^{1/2}. \]
The ``dot'' notation, e.g., $\dot g_\theta(z)$, represents a function that behaves like the derivative of $g_\theta(z)$ with respect to $\theta$ for pointwise in $z$.  If the usual partial derivative of $g_\theta(z)$ with respect to $\theta$ exists, then $\dot g_\theta(z)$ is that derivative; but suitable functions $\dot g_\theta(z)$ may exist even when the ordinary derivative fails to exist.  In particular, we assume existence of a suitable ``derivative'' $\theta \mapsto \dot q_\theta(z)$ of the square-root density.  Finally, let $\sL_2(\nu)$ denote the set of measurable functions on $\ZZ$ that are square $\nu$-integrable. 

\begin{asmp}
The parameter space $\TT$ is open and there exists a vector $\dot q_\theta(z) = \{ \dot q_{\theta,j}(z): j=1,\ldots,p\}$, whose coordinates $\dot q_{\theta,j}$ are elements of $\sL_2(\nu)$, such that the following conditions hold: 
\begin{itemize}
\item[A1.] the maps $\theta \mapsto \dot q_{\theta,j}$ from $\TT$ to $\sL_2(\nu)$ are continuous for each $j=1,\ldots,p$; 
\item[A2.] at each $\theta \in \TT$, 
\begin{equation}
\label{eq:dqm}
\int \bigl| q_{\theta + u}(z) - q_\theta(z) - u^\top \dot q_\theta(z) \bigr|^2 \, \nu(dz) = o(\|u\|^2), \quad u \to 0 \in \RR^p; 
\end{equation}
\item[A3.] and the $p \times p$ matrix $\int \dot q_\theta(z) \, \dot q_\theta(z)^\top \, \nu(dz)$ is non-singular for each $\theta \in \TT$. 
\end{itemize} 
\end{asmp}

The condition in \eqref{eq:dqm} is often described as $\theta \mapsto q_\theta$ being {\em differentiable in quadratic mean}.  Note that this condition does not require the square-root density to actually be differentiable at $\theta$, only that it be ``locally linear'' in a certain average sense.  The classical Cram\'er conditions assume more than two continuous derivatives, so \eqref{eq:dqm}, which does not even require existence of a first derivative, is significantly weaker; sufficient conditions for \eqref{eq:dqm} are given in \citet[][Lemma~7.6]{vaart1998}.  Then the {\em score function} $\dot\ell_\theta(z)$ is defined in terms of $\dot q_\theta(z)$ as 
\[ \dot\ell_\theta(z) = \frac{2\dot q_\theta(z)}{q_\theta(z)} \, 1\{q_\theta(z) > 0\}, \]
and it can be shown that $\int \dot\ell_\theta(z) \, \prob_\theta(dz) = 0$ for each $\theta$.  Moreover, Condition~A3 above implies non-singularity of the Fisher information matrix $I_\theta = \int \dot\ell_\theta(z) \ell_\theta(z)^\top \, \prob_\theta(dz)$ for each $\theta \in \TT$.  \citet[][Prop.~2.1.1]{bickel1998} provide sufficient conditions for A1--A3.   

The above conditions are rather mild, and hold in a broad range of problems, including exponential families and many more.  For some additional simplicity and interpretability, and with virtually no loss of generality, we will further assume that the log-likelihood function is twice continuously differentiable, which allows for alternative centering and scaling based on maximum likelihood estimators (rather than the true $\Theta$) and the observed Fisher information (rather than the Fisher information for a sample of size $1$).  

One further condition is required for the possibilistic Bernstein--von Mises theorem of \citet{martin2025asymptotic}, applied in Section~\ref{s:asymptotic-rule} above, namely, that the maximum likelihood estimator is consistent, i.e., $\hat\theta_{Z^n} \to \Theta$ in $\prob_\Theta$-probability as $n \to \infty$.  This, of course, is not automatic, but holds quite broadly when the dimension of the parameter space is fixed, as we are assuming here.  

A standard but relevant consequence of the conditions imposed above is the following: the full-data maximum likelihood estimator $\hat\theta_{Z^n}$ and the block weighted average $\check\theta_{S^n}$ are asymptotically equivalent.  

\begin{lem}\label{lemma:meta}
Under the stated conditions above, 
\begin{align*}
n^{1/2}(\hat\theta_n - \Theta) \to \nm_p(0, I_\Theta) \quad \text{and} \quad n^{1/2}(\check{\theta}_{S^n} - \Theta) \to \nm_p(0,I_\Theta),
\end{align*}
both in distribution under $\prob_{\Theta}$, where $I_\Theta$ is the $p \times p$ Fisher information matrix based on a sample of size 1.  Moreover, $n^{1/2}(\check\theta_{S^n} - \hat\theta_{Z^n}) \to 0$ in $\prob_{\Theta}$-probability. 
\end{lem}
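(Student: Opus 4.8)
The plan is to reduce all three assertions to a single asymptotically linear (influence-function) representation, namely that both $n^{1/2}(\hat\theta_{Z^n}-\Theta)$ and $n^{1/2}(\check\theta_{S^n}-\Theta)$ admit the common expansion
\[ I_\Theta^{-1}\, n^{-1/2}\sum_{i=1}^n \dot\ell_\Theta(Z_i) + o_P(1), \]
with the remainder vanishing in $\prob_\Theta$-probability. Once this is in hand, the first two displays follow from the classical central limit theorem applied to the i.i.d., mean-zero score summands (whose common variance is the per-observation Fisher information), yielding the stated limiting Gaussian law, and the third display follows because the two estimators share the \emph{same} leading term, so their scaled difference is a remainder.

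For the full-data estimator, this expansion is precisely the statement of asymptotic efficiency of the maximum likelihood estimator under differentiability in quadratic mean together with consistency; I would invoke \citet[][Thm~5.39]{vaart1998}, whose hypotheses are guaranteed by Conditions A1--A3 and the consistency assumption imposed in this appendix. Because each block $Z^{(b)}$ consists of $n_b$ i.i.d.\ draws from $\prob_\Theta$ with $n_b\to\infty$ (as $n^{-1}n_b\to a_b>0$), the identical result applies block-by-block, giving $n_b^{1/2}(\hat\theta_{Z^{(b)}}-\Theta)= I_\Theta^{-1} n_b^{-1/2}\sum_{i\in b}\dot\ell_\Theta(Z_i)+o_P(1)$, where the sum runs over block $b$. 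Next I would record the normalizations of the observed informations: under the additional twice-differentiability assumption, consistency of the block and full maximum likelihood estimators together with a uniform law of large numbers for $\ddot\ell_\theta$ gives $n_b^{-1}J_{Z^{(b)}}\to I_\Theta$ and $n^{-1}J_{Z^n}\to I_\Theta$ in $\prob_\Theta$-probability; summing the former over the finitely many blocks and using $\sum_b a_b=1$ yields $n^{-1}J_{S^n}=\sum_b (n_b/n)\,(n_b^{-1}J_{Z^{(b)}})\to I_\Theta$.

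The combination step uses the exact identity $\check\theta_{S^n}-\Theta = J_{S^n}^{-1}\sum_{b=1}^B J_{Z^{(b)}}(\hat\theta_{Z^{(b)}}-\Theta)$, which holds because $J_{S^n}=\sum_b J_{Z^{(b)}}$. The key observation is that the diverging factor $J_{Z^{(b)}}$ and the vanishing factor $\hat\theta_{Z^{(b)}}-\Theta$ combine cleanly: writing $J_{Z^{(b)}}(\hat\theta_{Z^{(b)}}-\Theta)=(n_b^{-1}J_{Z^{(b)}})\,n_b(\hat\theta_{Z^{(b)}}-\Theta)$ and substituting the block expansion and the information limit gives $J_{Z^{(b)}}(\hat\theta_{Z^{(b)}}-\Theta)=\sum_{i\in b}\dot\ell_\Theta(Z_i)+o_P(n^{1/2})$, where the $I_\Theta$ factors cancel. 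Summing over the fixed number $B$ of blocks collapses the right-hand side to the full score $\sum_{i=1}^n\dot\ell_\Theta(Z_i)+o_P(n^{1/2})$, and multiplying by $(n^{-1}J_{S^n})^{-1}\to I_\Theta^{-1}$ delivers the claimed common expansion for $\check\theta_{S^n}$. The third assertion then follows by subtracting the two expansions: $n^{1/2}(\check\theta_{S^n}-\hat\theta_{Z^n})=\{(n^{-1}J_{S^n})^{-1}-(n^{-1}J_{Z^n})^{-1}\}\,n^{-1/2}\sum_i\dot\ell_\Theta(Z_i)+o_P(1)$, and since both bracketed inverses converge to $I_\Theta^{-1}$ while the score average is $O_P(1)$, Slutsky's theorem gives convergence to $0$ in probability.

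The main obstacle is bookkeeping of the remainder terms rather than any deep new idea. Specifically, each block remainder is $o_P(n_b^{1/2})$, and after multiplication by the $O_P(n_b)$-sized observed information it must be shown to remain $o_P(n^{1/2})$ tightly enough that summing the $B$ blocks and dividing by $J_{S^n}$ does not resurrect it; this is where the twice-differentiability assumption is doing real work, since it licenses both the observed-information limits and the control of $n_b(\hat\theta_{Z^{(b)}}-\Theta)$. A secondary subtlety worth flagging is that $J_{S^n}$ and $J_{Z^n}$ are \emph{not} equal---the former aggregates block observed informations evaluated at the block-specific maximum likelihood estimators, the latter is evaluated at the full maximum likelihood estimator---so the difference argument cannot rely on their exact agreement, only on their common probability limit $I_\Theta$, which is precisely why the subtraction is carried out at the level of the normalized information matrices.
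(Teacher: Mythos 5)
Your proof is correct, and it takes a genuinely different---and more complete---route than the paper's. The paper's own proof is three sentences long: it asserts the two distributional limits as ``usual'' consequences of maximum likelihood asymptotics, and then obtains the third claim by writing $n^{1/2}(\check\theta_{S^n}-\Theta) = n^{1/2}(\check\theta_{S^n}-\hat\theta_{Z^n}) + n^{1/2}(\hat\theta_{Z^n}-\Theta)$ and arguing that, since the far-left and far-right terms have the same asymptotic distribution, the middle term must vanish lest the first statement be contradicted. Taken literally, that inference is invalid: two sequences can share a symmetric Gaussian marginal limit while their difference stays bounded away from zero (take one sequence to be the negative of the other), so equality of marginal limits alone forces no contradiction. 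What actually rescues the conclusion is precisely the structure you build: both $\hat\theta_{Z^n}$ and $\check\theta_{S^n}$ are asymptotically linear with the \emph{same} influence function $I_\Theta^{-1}\dot\ell_\Theta$, so their scaled difference is $o_P(1)$ by construction rather than by a marginal-distribution argument. Your bookkeeping---the exact identity $\check\theta_{S^n}-\Theta=J_{S^n}^{-1}\sum_{b=1}^B J_{Z^{(b)}}(\hat\theta_{Z^{(b)}}-\Theta)$, the block-level expansions under condition \eqref{eq:n.condition}, and the normalized information limits $n^{-1}J_{S^n}\to I_\Theta$ and $n^{-1}J_{Z^n}\to I_\Theta$---is sound, and it is also the same machinery the paper invokes later in the proof of Theorem~\ref{thm:naive-efficiency} (where $J_{S^n}=nI_\Theta+o_P(n)$ is used), so your approach buys a self-contained and airtight lemma at the cost of some length. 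Two small caveats: your appeal to \citet[][Thm.~5.39]{vaart1998} needs a local Lipschitz condition on $\theta\mapsto\ell_\theta(z)$ that is not literally among Conditions A1--A3, though this matches the paper's own level of rigor in treating these expansions as standard; and your expansion yields asymptotic covariance $I_\Theta^{-1}$, not $I_\Theta$ as the lemma states---that discrepancy is a typo in the paper's statement, not an error in your argument.
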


\begin{proof}
The first two claims follow from the usual asymptotic Gaussianity of the maximum likelihood estimator under the usual regularity conditions. Write
\begin{align*}
n^{1/2}(\check{\theta}_{S^n} - \Theta) = n^{1/2}(\check{\theta}_{S^n} - \hat\theta_{Z^n}) + n^{1/2}(\hat\theta_{Z^n} - \Theta).
\end{align*}
The terms on the far right and far left of the above display have the same asymptotic distribution.  If $n^{1/2}(\check{\theta}_{S^n} - \hat\theta_{Z^n})$ converged to anything other than 0, then that would contradict the lemma's first statement. This implies the lemma's second claim. 
\end{proof}

\subsection{Proof of Theorem~\ref{thm:naive-efficiency}}

To start, we immediately get 
\begin{equation}
\label{eq:bound}
\begin{split}
\sup_{u \in C} \bigl| & \pi_{s^n}^\infty(\hat\theta_{Z^n} + J_{Z^n}^{-1/2} u) - \pi_{Z^n}(\hat\theta_{Z^n} + J_{Z^n}^{-1/2} u) \bigr| \\
& \leq \sup_{u \in C} \bigl| \pi_{s^n}^\infty(\hat\theta_{Z^n} + J_{Z^n}^{-1/2} u) - \gamma(u) \bigr| + \sup_{u \in C} \bigl| \pi_{Z^n}(\hat\theta_{Z^n} + J_{Z^n}^{-1/2} u) - \gamma(u) \bigr|, 
\end{split}
\end{equation}
where $\gamma(u) = 1 - F_p(\|u\|^2)$ is the standard $p$-dimensional Gaussian possibility contour, with $F_p$ the $\chisq(p)$ distribution function.  Then the main result in \cite{martin2025asymptotic} implies that the latter term is vanishing in $\prob_{\Theta}$-probability.  So, it is enough for us to focus on the former term in the upper bound \eqref{eq:bound}.  Define the quadratic form
\[ Q_{S^n}(u) = (\check{\theta}_{S^n} - \hat\theta_{Z^n} - J_{Z^n}^{-1/2} u)^\top J_{S^n} (\check{\theta}_{S^n} - \hat\theta_{Z^n} - J_{Z^n}^{-1/2} u), \]
so that the former term above can be re-expressed as 
\[ \bigl| \pi_{s^n}^\infty(\hat\theta_{Z^n} + J_{Z^n}^{-1/2} u) - \gamma(u) \bigr| = \bigl| F_p\{Q_{S^n}(u)\} - F_p(\|u\|^2) \bigr|. \]
By the mean-value theorem, 
\[ \bigl| F_p\{Q_{S^n}(u)\} - F_p(\|u\|^2) \bigr| \leq \bigl\{ \sup_r f_p(r) \bigr\} \times \bigl| Q_{S^n}(u) - \|u\|^2 \bigr|, \]
where $f_p$ is the derivative of $F_p$, i.e., the non-negative $\chisq(p)$ density function, and $r$ varies over the range of $Q_{S^n}(u)$ and/or $\|u\|^2$ for $u \in C$.  Since the leading factor in the upper bound is $O(1)$, it suffices to consider the difference $Q_{S^n}(u) - \|u\|^2$.  Clearly, 
\begin{align*}
\bigl| Q_{S^n}(u) - \|u\|^2 \bigr| & \leq (\check{\theta}_{S^n} - \hat\theta_{Z^n})^\top J_{S^n} (\check{\theta}_{S^n} - \hat\theta_{Z^n}) \\
& \qquad + 2 \bigl| (\check{\theta}_{S^n} - \hat\theta_{Z^n})^\top J_{S^n} J_{Z^n}^{-1/2} u \bigr| \\
& \qquad + \bigl| u^\top J_{Z^n}^{-1/2} J_{S^n} J_{Z^n}^{-1/2} u - \|u\|^2 \bigr|. 
\end{align*}
The first two terms are $o_P(1)$ by Lemma~\ref{lemma:meta} above and some details discussed below; the second term requires first an application of the Cauchy--Schwarz inequality and then is seen to be $o_P(1)$ uniformly for $u \in C$.  For the third term, we get 
\begin{align*}
\bigl| u^\top J_{Z^n}^{-1/2} J_{S^n} J_{Z^n}^{-1/2} u - \|u\|^2 \bigr| \leq \text{eig}_\text{max} \bigl\{ J_{Z^n}^{-1/2} J_{S^n} J_{Z^n}^{-1/2} - I_{p \times p}\bigr\} \, \|u\|^2,
\end{align*}
with $I_{p \times p}$ the identity matrix.  Note that $J_{Z^n} = n I_\Theta + o_P(n)$ and $J_{S^n} = n I_\Theta + o_P(n)$, where the latter claim follows since $J_{S^n} = \sum_{b=1}^B J_{Z^{(b)}}$, with $J_{Z^{(b)}} = n_b I_\Theta + o_P(n)$ by \eqref{eq:n.condition}.  This implies that the eigenvalue in the above display is $o_P(1)$, so the upper bound vanishes uniformly for $u \in C$.  Putting everything together, we get that both terms in the upper bound \eqref{eq:bound} are vanishing in $\prob_{\Theta}$-probability, which proves the claim.   

\section{Details from Section~\ref{s:proposal}}\label{a:proposal}

\subsection{Computation of confidence intervals}

The left and right panels of Figure~\ref{f:example-stat-inference} illustrate the construction of a $80\%$ confidence interval for $\Theta$ in the Gaussian and Exponential examples, respectively, of Section~\ref{ss:thule}. In the Gaussian example, $\pi_{s^n}^\forall(\theta)=\pi_{s^n}^\infty(\theta)=\pi_{z^n}(\theta)$. In the Exponential example, the valid divide-and-conquer contour $\pi_{s^n}^\forall(\theta)$ is estimated using Monte Carlo samples following our proposal in Section \ref{ss:middle-computation} (see equation \eqref{eq:contour.mc}) with $\theta^\dagger=\check{\theta}_{s^n}$ and $M=50{,}000$. 

\begin{figure}[t]
\centering
\begin{subfigure}[t]{0.49\textwidth}
\includegraphics[width=\textwidth]{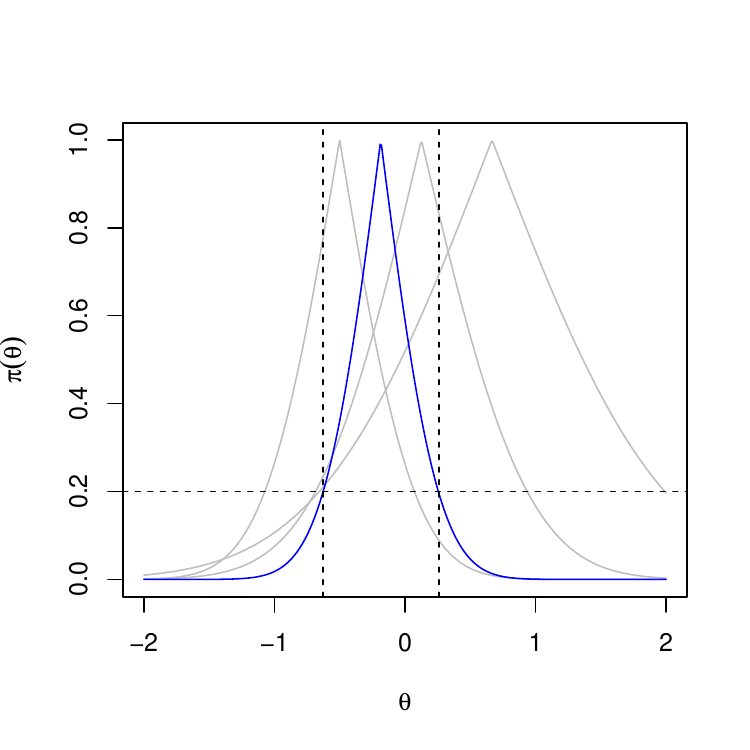}
\end{subfigure}
\begin{subfigure}[t]{0.49\textwidth}
\includegraphics[width=\textwidth]{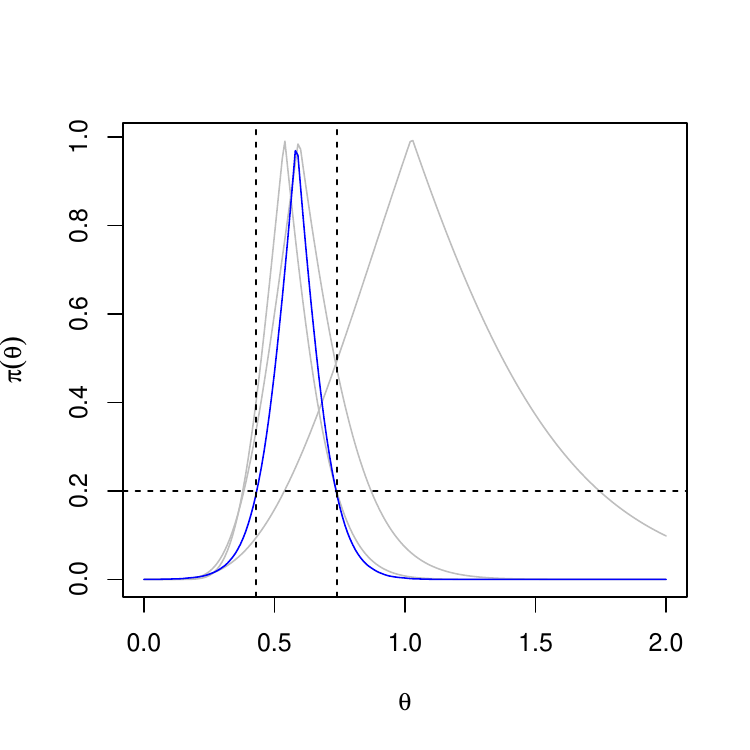}
\end{subfigure}
\caption{Individual (grey) and valid divide-and-conquer (blue) possibility contours in the Gaussian example (left) and the Exponential example (right). Horizontal and vertical dashed black lines illustrate the construction of a 80\% confidence interval for $\Theta$.} \label{f:example-stat-inference}
\end{figure}

\subsection{Proof of Theorem~\ref{thm:middle_efficiency}}

Introduce the notation 
\[ G_n^\theta(r) = \prob_\theta\{ R^\text{gauss}(S^n, \theta) \leq r \}, \quad r \in [0,1], \]
so that 
\[ \pi_{s^n}^\forall(\theta) = G_n^\theta\{ R^\text{gauss}(S^n, \theta) \}, \quad \theta \in \TT. \]
Following \cite{hedges1981distribution, hedges1983combining}, under the stated regularity conditions, 
\[ -2\log R^\text{gauss}(S^n, \theta) \to \chisq(p) \quad \text{in distribution under $\prob_\theta$, for each $\theta \in \TT$}. \]
By this and the continuous mapping theorem, we know that 
\[ | G_n^\theta(r) - G(r) | \to 0, \quad \text{for each $r$ as $n \to \infty$ and each $\theta$}, \]
where $G$ is the distribution function of a $\exp\{ -\frac12 \chisq(p) \}$ random variable.  Since these are distribution functions---bounded and non-decreasing---the pointwise convergence implies uniform convergence, i.e., 
\[ \| G_n^\theta - G \|_\infty := \sup_{r \in [0,1]} | G_n^\theta(r) - G(r) | \to 0, \quad \text{for each $\theta$}. \]
Importantly, we will show below that this holds (locally) uniformly in $\theta$.  

Like in the proof of Theorem \ref{thm:naive-efficiency}, we start with the bound 
\begin{align*}
\sup_{u \in C} \bigl| \pi_{s^n}^\forall(\hat\theta_{Z^n}& + J_{Z^n}^{-1/2}u) - \pi_{Z^n}(u) \bigr| \\
& \leq \sup_{u \in C} \bigl| \pi_{s^n}^\forall(\hat\theta_{Z^n} + J_{Z^n}^{-1/2}u) - \gamma(u) \bigr| + \sup_{u \in C} \bigl| \pi_{Z^n}(\hat\theta_{Z^n} + J_{Z^n}^{-1/2}u) - \gamma(u) \bigr|, 
\end{align*}
where the $\gamma(u) = G(e^{-\frac12 \|u\|^2})$ is the Gaussian possibility contour; this can also be expressed in terms of a $\chisq(p)$ distribution function, as we had done previously.  The second term in the above vanishes in $\prob_{\Theta}$-probability by \cite{martin2025asymptotic}, so we focus here on the first term.  We proceed with the further upper bound
\begin{align*}
\bigl| \pi_{s^n}^\forall&(\hat\theta_{Z^n} + J_{Z^n}^{-1/2}u) - \gamma(u) \bigr| \\
& \leq \bigl| G_n^{\vartheta_n^u}\{ R^\text{gauss}(S_n, \vartheta_n^u) \} - G\{ R^\text{gauss}(S^n, \vartheta_n^u) \} \bigr| + \bigl| G\{ R^\text{gauss}(S^n, \vartheta_n^u) \} - \gamma(u) \bigr|, 
\end{align*}
where $\vartheta_n^u = \check\theta_{S^n} + J_{S^n}^{-1/2}u$.  By definition of $R^\text{gauss}$, we have 
\[ R^\text{gauss}(S^n, \vartheta_n^u) \equiv e^{-\frac12 \|u\|^2}, \]
so the second term in the penultimate display is 0 and drops out completely.  Now, since the second term is gone and the first term involves a difference of two functions applied to the same argument, we get the bound:   
\begin{equation}
\label{eq:eq.bound}
\bigl| \check\pi_{s^n}^\forall(\hat\theta_{Z^n} + J_{Z^n}^{-1/2} u) - \gamma(u) \bigr| \leq \bigl\| G_n^{\vartheta_n^u} - G \bigr\|_\infty. 
\end{equation}
This connects our present goal with the (uniform) convergence of the $G_n^\theta$ distribution functions as mentioned briefly above. 

To simplify the notation, let $\vartheta_n$ be a (deterministic) sequence that is bounded within a compact set $\mathcal{T}$.  Then $G_n^{\vartheta_n}$ determines the distribution of the quadratic form 
\[ (\check\theta_{S^n} - \vartheta_n)^\top J_{S^n} \, (\check\theta_{S^n} - \vartheta_n), \]
when $\vartheta_n$ is {\em the true value of the parameter}.  The Wilks-like theorems do not impose any restrictions on the true parameter beyond that it is not on the boundary of the (open) parameter space and that the model is sufficiently smooth there.  Our regularity conditions already imply that smoothness holds everywhere---hence at $\vartheta_n$---and the restriction to a compact $\mathcal{T}$ keeps $\vartheta_n$ away from the boundary.  Therefore, 
\[ \bigl\| G_n^{\vartheta_n} - G \bigr\|_\infty \to 0 \quad \text{for every sequence $(\vartheta_n) \subset \mathcal{T}$}. \]
But if it holds for every sequence, then it must hold for the maximizer 
\[ \vartheta_n^\mathcal{T} = \arg\max_{\theta \in \mathcal{T}} \| G_n^\theta - G \|_\infty, \]
which, in turn, implies that 
\[ \sup_{\theta \in \mathcal{T}} \| G_n^\theta - G \|_\infty \to 0. \]
In our present case, the upper bound \eqref{eq:eq.bound} involves the collection of sequences of random variables $\vartheta_n^u$, indexed by $u$.  The asymptotic properties of $\check\theta_{S^n}$, as established in, e.g., \citet{hedges1981distribution}, ensure that $\|\vartheta_n^u - \Theta \| = O_P(n^{1/2})$ under $\prob_\Theta$, uniformly for $u \in C$.  So, for any compact $C$, there exists a compact $\mathcal{T}$ such that $\vartheta_n^u$ is in $\mathcal{T}$ for all $u \in C$ with $\prob_{\Theta}$-probability converging to 1.  Therefore, on the event where $\vartheta_n^u$ is constrained to the previously-defined $\mathcal{T}$, which depends on $C$, we have that 
\[ \sup_{u \in C} \bigl| \pi_{s^n}^\forall(\hat\theta_{Z^n} + J_{Z^n}^{-1/2} u) - \gamma(u) \bigr| \leq \bigl\| G_n^{\vartheta_n^u} - G \bigr\|_\infty \leq \sup_{\theta \in \mathcal{T}} \| G_n^\theta - G \|_\infty. \]
Since the upper bound in the above display is vanishing, and the aforementioned event has probability converging to 1, we conclude that the left-hand side of \eqref{eq:eq.bound} is vanishing in $\prob_{\Theta}$-probability, proving the claim. 

\subsection{Computation of valid divide-and-conquer IM} \label{a:importance-sampling}

In cases where there may be concerns about how close the working relative likelihood is to being an approximate pivot, there are alternatives.  A middle-ground between the na\"{i}ve strategy at the beginning of this subsection, and the strategy that fully leans into the ``approximately pivotal'' feature of the working relative likelihood in equation \eqref{eq:contour.mc} uses importance sampling.  Let $f_\theta(s^n)$ denote the density/mass function corresponding to the sampling distribution $\prob_{\theta}^{S^n}$ of the study-specific summary statistics under the posited model. Then we can approximate the valid divide-and-conquer contour by sampling $S_m^n = (S_1^m,\ldots,S_k^m)$ independently from $\prob_{\theta^\dagger}^{S^n}$ 
\begin{align}
\pi_{s^n}^\text{v}(\theta) \approx \frac{1}{M} \sum_{m=1}^M \mathbbm{1} \{ R_{S_m^n}^\text{work}(\theta) \leq R_{s^n}^\text{work}(\theta)\} \, w_m(\theta, \theta^\dagger), \quad \theta \in \TT,
\label{e:proposal-fused-IM-MC}
\end{align}
where the (importance) weights are the likelihood ratios 
\begin{align*}
w_m(\theta, \theta^\dagger) = \frac{ f_{\theta}(S_m^n)}{f_{\theta^\dagger}(S_m^n)}, \quad m=1,\ldots,M.
\end{align*}
The approximation ``$\approx$'' means that the right-hand side is an unbiased estimator of the left-hand side, so the former converges almost surely to the latter as $M \to \infty$.  Aside from the weights, there is another important difference between \eqref{eq:contour.mc} and \eqref{e:proposal-fused-IM-MC}, namely, the argument at which the working relative likelihood is evaluated: the former evaluates the working relative likelihood at the anchor $\theta^\dagger$ whereas the latter evaluates at the target value $\theta$.  This difference is important because evaluating the working relative likelihood at a parameter value different from that used to generate the summary statistics $S_1,\ldots,S_k$ will break whatever approximate pivot structure is present.  But the motivation for using this middle-ground computational strategy was that there may be doubts about how strong that approximate pivot structure is, so breaking would not be such a severe action to take.  The advantage to this approach is that one does not need to take different Monte Carlo samples for each $\theta$, as the original/na\"{i}ve Monte Carlo approach requires.  A similar-quality approximation should be possible with a relatively small subset of anchors at which the Monte Carlo samples can be taken, but it would depend on various aspects of the problem whether this safer, importance sampling-based approach is better than an approach that leans into the approximate pivot structure.  For example, in exact pivot cases like the Gaussian and exponential models, the importance sampling strategy requires many anchors whereas we know that the simple strategy of equation \eqref{eq:contour.mc} works perfectly with one arbitrarily chosen anchor.  

\section{Details from Section \ref{s:simulations}}

\subsection{Sampling from alpha-stable distributions} \label{a:alpha-stable}

The algorithm for sampling from alpha-stable distributions given here is due to \cite{chambers01061976}. Define $\zeta = -\beta \tan(\pi \alpha/2)$ and $\xi=\arctan (-\zeta)/\alpha$ if $\alpha \neq 1$ and $\xi=\pi/2$ if $\alpha=1$. First, sample two independent random variables $U \sim \unif(-\pi/2, \pi/2)$ and $W\sim \expo(1)$. Then the random variable
\begin{align*}
X &= \begin{cases}
(1+\zeta^2)^{\frac{1}{2\alpha}} \frac{\sin\{\alpha (U+\xi)\}}{\{ \cos(U)\}^{\frac{1}{\alpha}}} \bigl[ \frac{\cos\{ U-\alpha(U+\xi)\}}{W} \bigr]^{\frac{1-\alpha}{\alpha}} &\mbox{ if } \alpha \neq 1\\
\frac{1}{\xi} \bigl[ \bigl( \frac{\pi}{2} + \beta U \bigr) \tan(U) - \beta \log \bigl\{ \frac{\frac{\pi}{2} W \cos(U)}{\frac{\pi}{2} + \beta U} \bigr\} \bigr] &\mbox{ if } \alpha = 1
\end{cases}
\end{align*}
is alpha-stable with parameters $\alpha$, $\beta$, $c=1$ and $\mu=0$. From $X$ we can construct the random variable
\begin{align*}
Y &= \begin{cases}
cX + \mu & \mbox{ if } \alpha \neq 1\\
cX + \frac{2}{\pi} \beta c \log(c) + \mu & \mbox{ if } \alpha=1
\end{cases},
\end{align*}
which is alpha-stable with parameters $\alpha, \beta, c, \mu$.

\subsection{Full maximum likelihood for alpha-stable simulation} \label{a:alpha-stable-MLE}

To motivate the need for the validification step of the maximum likelihood inference based on the full data, described in Section \ref{ss:oracle-IM}, we compute large-sample confidence intervals for $\Theta$ in the $1{,}000$ replicates of Section \ref{ss:alpha-stable-sims}. Incidentally, of the $1{,}000$ replicates, $27$ have an observed information matrix that is not positive definite, illustrating the difficulty of numerical optimization. For each of the remaining $973$ replicates, we compute the $100(1-\alpha)\%$ confidence intervals for $\Theta$ using $\hat{\theta}_{z^{(b)}} \pm z_{\alpha/2} J_{z^{(b)}}^{-1/2}$. The marginal empirical coverage probability at level $100(1-\alpha)\%$, resported in Table \ref{t:alpha-stable-MLE-CP}, is the proportion of the $973$ computed intervals that contain the true value, $\Theta$. As there are no validity guarantees, the empirical coverage probabilities are below the nominal levels for some values of $\alpha$, in particular for $\beta$. 

\begin{table}[h!]
\caption{Empirical coverage probability (in \%) based on the full maximum likelihood estimator of the alpha-stable simulations.} \label{t:alpha-stable-MLE-CP}
\centering
\ra{0.8}
\begin{tabular}{rrrrrrrrrr}
Parameter & \multicolumn{9}{c}{$100(1-\alpha)\%$} \\
& $10$ & $20$ & $30$ & $40$ & $50$ & $60$ & $70$ & $80$ & $90$\\
\midrule
$\mu$ & 10.4 & 18.8 & 27.1 & 34.6 & 41.7 & 50.1 & 59.2 & 74.3 & 87.9 \\
$c$ & 8.32 & 19.2 & 30.4 & 40.7 & 50.7 & 60.7 & 70.9 & 80.1 & 89.3 \\
$\beta$ & 7.91 & 18.5 & 26.5 & 36.6 & 46.4 & 59.2 & 70.4 & 79.4 & 89.2 \\
\end{tabular}
\end{table}

\subsection{Full maximum likelihood for $g$-and-$k$ simulation}
\label{a:g-and-k-MLE}

To compare our divide-and-conquer approach to the maximum likelihood inference based on the full data, described in Section \ref{ss:oracle-IM}, we compute large-sample confidence intervals for $\Theta$ in the $1{,}000$ replicates of Section \ref{ss:alpha-stable-sims}. Of the $1{,}000$ replicates, $11$ have an observed information matrix (i.e. negative hessian of the log-likelihood evaluated at $\hat{\theta}_n$) that is not positive definite, illustrating the difficulty of numerical optimization. For each of the remaining $989$ replicates, we compute the $100(1-\alpha)\%$ confidence intervals for $\Theta$ using $\hat{\theta}_n \pm z_{\alpha/2} J_{n_i}^{-1/2}$. The marginal empirical coverage probability at level $100(1-\alpha)\%$, resported in Table \ref{t:g-and-k-MLE-CP}, is the proportion of the $989$ computed intervals that contain the true value, $\Theta$. While there are no validity guarantees, the empirical coverage probabilities seem to track across the nominal levels, seeming to suggest that the total sample size $n=200$ is large enough to give approximately valid inference. Of course, there is no guarantee that this will be the case, and so a comprehensive simulation for the full MLE would need to be carried out in each new setting.

\begin{table}[h!]
\centering
\caption{Empirical coverage probability (in \%) for full maximum likelihood inference in the $g$-and-$k$ simulations.} \label{t:g-and-k-MLE-CP}
\ra{0.8}
\begin{tabular}{rrrrrrrrrr}
Parameter & \multicolumn{9}{c}{$100(1-\alpha)\%$} \\
& $10$ & $20$ & $30$ & $40$ & $50$ & $60$ & $70$ & $80$ & $90$\\
\midrule
$\mu$ & 10.5 & 22.0 & 32.7 & 43.7 & 53.3 & 63.6 & 73.6 & 83.4 & 90.9 \\
$\sigma$ & 12.1 & 23.4 & 33.9 & 42.3 & 52.1 & 63.6 & 72.7 & 82.5 & 91.5 \\
$g$ & 8.09 & 17.8 & 28.0 & 38.8 & 49.5 & 59.6 & 69.7 & 80.8 & 89.1 \\
$k$ & 10.1 & 20.0 & 31.4 & 41.1 & 52.5 & 63.2 & 73.1 & 82.4 & 91.2 \\
\end{tabular}
\end{table}

\subsection{PM\textsubscript{2.5} analysis details} \label{a:data}

Figure \ref{f:b-spline} plots the splines used for the scale parameter of the $g$-and-$k$ distribution, omitting the intercept spline $B_{0,5}$.

\begin{figure}[h!]
\centering
\includegraphics[width=0.5\textwidth]{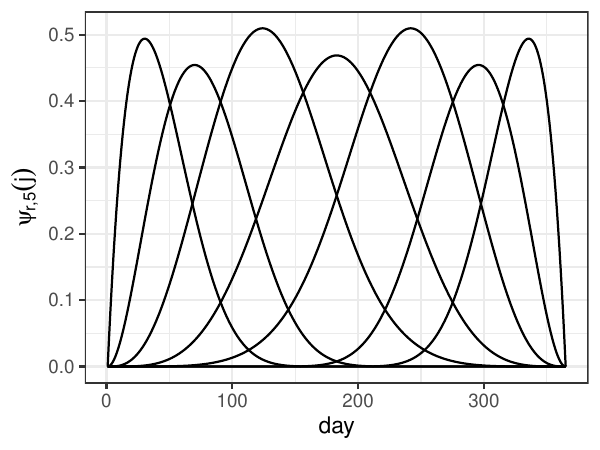}
\caption{B-spline basis for PM\textsubscript{2.5} data analysis.} \label{f:b-spline}
\end{figure}

\bibliographystyle{apalike}
\bibliography{IM_fusion-bib}

\end{document}